\InputIfFileExists{public-build.cfg}{}{}

\documentclass[11pt]{article}

\usepackage[letterpaper,margin=1in]{geometry}
\usepackage{amsmath,amssymb,amsthm,mathtools}
\usepackage{microtype}
\usepackage{enumitem}
\usepackage{xcolor}
\usepackage{booktabs}
\usepackage{array}
\usepackage{float}
\usepackage{aliascnt}
\usepackage[numbers,sort&compress]{natbib}
\usepackage[colorlinks=true,linkcolor=blue!55!black,citecolor=blue!55!black,
  urlcolor=blue!55!black]{hyperref}
\usepackage[nameinlink,noabbrev]{cleveref}

\newcommand{\PaperAuthorBlock}{}
\newcommand{\PaperPdfAuthors}{Anonymous author}
\newif\ifpublicpaper
\ifdefined\BuildPublicPaper
  \publicpapertrue

\renewcommand{\PaperPdfAuthors}{Jack Fitzsimons}
\renewcommand{\PaperAuthorBlock}{%
  Jack Fitzsimons\\
  Oblivious%
}
{Public build requested without public-metadata.tex}{%
      Copy public-metadata.example.tex to public-metadata.tex and replace all placeholders.}%
  }
\else
  \publicpaperfalse
\fi

\hypersetup{
  pdftitle={Pure-DP Statistical Query Release at the Conjectured Square-Root Rate},
  pdfauthor={\PaperPdfAuthors},
  pdfsubject={Pure-DP statistical-query release at the conjectured ell-infinity rate},
  pdfkeywords={differential privacy, query release, private multiplicative weights, Rényi divergence, Maurey method}
}

\newtheorem{theorem}{Theorem}[section]
\newaliascnt{lemma}{theorem}
\newtheorem{lemma}[lemma]{Lemma}
\aliascntresetthe{lemma}
\newaliascnt{proposition}{theorem}
\newtheorem{proposition}[proposition]{Proposition}
\aliascntresetthe{proposition}
\newaliascnt{corollary}{theorem}

\aliascntresetthe{corollary}
\theoremstyle{definition}
\newaliascnt{definition}{theorem}

\aliascntresetthe{definition}
\newaliascnt{remark}{theorem}
\newtheorem{remark}[remark]{Remark}
\aliascntresetthe{remark}

\AddToHook{env/theorem/begin}{\crefalias{section}{theorem}}
\AddToHook{env/lemma/begin}{\crefalias{section}{lemma}}
\AddToHook{env/proposition/begin}{\crefalias{section}{proposition}}
\AddToHook{env/corollary/begin}{\crefalias{section}{corollary}}
\AddToHook{env/definition/begin}{\crefalias{section}{definition}}
\AddToHook{env/remark/begin}{\crefalias{section}{remark}}

\crefname{theorem}{Theorem}{Theorems}
\crefname{lemma}{Lemma}{Lemmas}
\crefname{proposition}{Proposition}{Propositions}
\crefname{corollary}{Corollary}{Corollaries}
\crefname{definition}{Definition}{Definitions}
\crefname{remark}{Remark}{Remarks}

\newcommand{\R}{\mathbb R}
\newcommand{\E}{\mathbb E}
\newcommand{\Prb}{\mathbb P}
\newcommand{\eps}{\varepsilon}
\newcommand{\cD}{\mathcal D}
\newcommand{\cQ}{\mathcal Q}
\newcommand{\cS}{\mathcal S}
\newcommand{\cV}{\mathcal V}
\newcommand{\Om}{\Omega}
\newcommand{\KL}{\mathrm{KL}}
\newcommand{\Ren}{D}
\newcommand{\Ans}{\operatorname{Ans}}
\newcommand{\defeq}{\mathrel{:=}}
\newcommand{\norm}[1]{\left\lVert#1\right\rVert}
\newcommand{\abs}[1]{\left\lvert#1\right\rvert}

\makeatletter
\renewenvironment{abstract}
  {\begin{center}\bfseries\abstractname\end{center}\begin{quotation}}
  {\end{quotation}}
\makeatother

\title{Pure-DP Statistical Query Release\\
at the Conjectured Square-Root Rate}
\author{\PaperAuthorBlock}
\date{}

\begin{document}
\maketitle

\begin{abstract}
Nikolov and Ullman posed two open problems about releasing $k$ statistical
queries on a universe of size $T$ under pure differential privacy
\cite{nikolov2021openproblem}.  Nikolov subsequently resolved their
normalized-Euclidean problem \cite{nikolov2023jl}.  We prove the remaining
conjectured upper bound.  For
every database size $n$ and privacy parameter $\eps>0$, there is an
$\eps$-differentially private mechanism with expected worst-coordinate error
at most
\[
  129e\min\left\{1,
  \sqrt{\frac{\log(2T)\log(2k)}{\eps n}}\right\}.
\]
The same envelope framework also gives an alternative, information-theoretic
mechanism with expected
normalized Euclidean error at most
\[
  \frac1{\sqrt{k}}\E\norm{M(x)-F(x)}_2
  \le 62e\min\left\{1,
  \sqrt{\frac{\log(2T)}{\eps n}}\right\}.
\]
This second bound has no dependence on the number of queries and recovers the
rate already achieved efficiently by Nikolov.  The shifted logarithms and
outer minima make both statements valid without additional parameter
assumptions.

Both constructions start from private multiplicative-weights (PMW)
transcripts.  For worst-coordinate error, each round privately selects a
signed query.  For normalized Euclidean error, the query is sampled uniformly
and independently of the data, and only its sign is selected privately.
We replace either transcript probability mass function by a
distance-penalized likelihood envelope: a transcript receives the largest
likelihood it has under any database, discounted exponentially by that
database's Hamming distance from the true input.  After normalization, this
yields the pointwise likelihood-ratio bound required for pure DP.  To prove
that the modification preserves
accuracy, a likelihood-level Maurey argument upper-bounds each Hamming-ball
maximum by a small family of auxiliary PMW laws.  R\'enyi moment bounds
control nearby balls, a direct mixture bound controls distant balls, and
grouping radii at scale $1/\eps$ prevents an additional $1/\eps$ loss.

The mechanisms are information-theoretic.  A companion Lean 4 development
machine-checks both finite constructions, pure privacy after deterministic
decoding, and both displayed all-regimes upper bounds.
\end{abstract}
\clearpage

\section{Introduction}

Suppose a curator wants to publish many averages of the same private
database.  For a single average, sensitivity is $O(1/n)$, so
sensitivity-calibrated noise is small.  Answering $k$ averages independently,
however, incurs cumulative privacy cost.  The query-release problem asks
whether the curator can exploit the fact that all answers come from one
empirical distribution.

Formally, let $\cD$ be a finite data universe, let $\cQ$ be a nonempty finite
family of bounded functions $q:\cD\to[-1,1]$, and let
$x=(x_1,\ldots,x_n)\in\cD^n$.  The answer to $q$ is
\[
  F_q(x)=\frac1n\sum_{i=1}^n q(x_i),
  \qquad F(x)=(F_q(x))_{q\in\cQ}.
\]
We seek one private random vector that approximates every coordinate of
$F(x)$.  This is the statistical-query release problem in differential privacy
\cite{dwork2006calibrating,dwork2014foundations}.

The gap addressed here is between approximate and pure privacy.  Approximate
DP allows an additive $\delta>0$ term in the privacy inequality, whereas pure
DP sets $\delta=0$.  Private
multiplicative weights and iterative database constructions achieve
$n^{-1/2}$-type accuracy under approximate DP
\cite{hardt2010multiplicative,gupta2012iterative,hardt2012simple}.  The
classical pure-DP small-database method gives only a cube-root upper bound for
worst-coordinate error \cite{blum2013learning}.  Nikolov and Ullman isolated
this gap as Open Problem~1 on DifferentialPrivacy.org.  Their statement
explicitly focuses on expected error, while noting that query-release
algorithms often also provide high-probability guarantees
\cite{nikolov2021openproblem}.  In precisely this expected-$\ell_\infty$
sense, they conjectured the pure-DP rate
\[
  \sqrt{\frac{\log k\log T}{\eps n}},
\]
and noted that it would match the known lower-bound scaling in the standard
high-dimensional regimes \cite{hardt2011thesis,nikolov2021openproblem}.
Open Problem~2 asks whether expected $\ell_2$ error $\alpha\sqrt{k}$ can be
achieved with
\[
  \alpha=\sqrt{\frac{\log T}{\eps n}},
\]
with no dependence on $k$.  Nikolov resolved that problem with an efficient
Johnson--Lindenstrauss mechanism \cite{nikolov2023jl}.  We establish the
conjectured Open Problem~1 upper bound, uniformly over all finite parameter
choices.  For completeness, \cref{sec:l2} shows that the envelope method also
recovers the known Open Problem~2 rate by a different, information-theoretic
mechanism.  We make no priority claim for that second rate.  The corresponding
lower bounds apply in their stated nondegenerate high-dimensional regimes.

The obstacle is a mismatch between the privacy information supplied by PMW
and the privacy guarantee we need.  At the square-root parameters, a
selection-only PMW transcript is accurate and its privacy loss has good
finite moments.  Such moment bounds underlie concentrated and approximate
DP, but pure DP requires a pointwise likelihood-ratio inequality for every
transcript.  Conditioning on a likely ``good'' privacy-loss event does not
repair this mismatch: two neighboring datasets could retain different
supports.  Nor can we union-bound over nearby databases, because even a small
Hamming ball may contain exponentially many of them.

We repair the transcript distribution itself.  Let $P_x$ be the distribution
of the selection-only PMW transcript on database $x$, and let $p_x(\omega)$
be the probability of transcript $\omega$.  Before releasing anything, form
the unnormalized envelope
\[
  \widetilde p_x(\omega)
  =\max_{y\in\cD^n}e^{-(\eps/2)d_H(x,y)}p_y(\omega).
  \tag{1.1}\label{eq:intro-envelope}
\]
For a fixed transcript, the expression asks how likely that transcript could
be under any database, but charges a factor $e^{-\eps/2}$ for each changed
row.  Moving the input $x$ by one row changes every discounted likelihood by
at most $e^{\eps/2}$.  The same is true of their maximum.  After dividing by
the total envelope mass, the changing normalizer costs a second
$e^{\eps/2}$, and the resulting transcript mechanism is pure $\eps$-DP.  This
factor-two normalization pattern mirrors the metric-extension geometry of
Borgs, Chayes, Smith, and Zadik \cite{borgs2018private}.

There is also a direct differential-privacy antecedent.  If each reference
law $P_y$ is a point mass at a deterministic statistic $f(y)$, then for every
$\omega$ in the range of $f$, \eqref{eq:intro-envelope} becomes
\[
  \widetilde p_x(\omega)
  =\exp\!\left(-\frac{\eps}{2}
    \min_{y:f(y)=\omega} d_H(x,y)\right),
\]
which is exactly the discrete inverse-sensitivity mechanism studied by Asi
and Duchi after normalization \cite{asi2020near}.  Here the reference laws are
instead nondegenerate PMW transcript distributions, and the central utility
question is whether their likelihood envelope remains close enough to the
original PMW law.

Privacy follows directly from the metric form of the envelope; utility is the
main technical issue.  Because the envelope pointwise dominates the original
PMW mass function, it may assign additional probability to
transcripts that decode to inaccurate answers.  Utility therefore reduces to
two questions: how much total probability mass the envelope adds, and how
much error-weighted mass it adds.  The rest of the proof answers both questions
without enumerating the databases in a Hamming ball.

\subsection{Main theorem}

Throughout, databases have fixed size and two databases are adjacent when one
row is replaced, equivalently when their Hamming distance is one.  A
randomized mechanism is pure $\eps$-differentially private if, for neighboring
$x,x'$ and every subset $E$ of its output space,
\[
  \Prb[M(x)\in E]\le e^\eps\Prb[M(x')\in E].
  \tag{1.2}\label{eq:event-dp}
\]

\begin{theorem}[The conjectured pure-DP query-release upper bound, with explicit constant]
\label{thm:main}
Let $\cD$ be finite with $T=\abs{\cD}\ge1$, let $\cQ$ be nonempty and finite with
$k=\abs{\cQ}\ge1$, let $n\ge1$, and suppose every $q\in\cQ$ maps $\cD$ to
$[-1,1]$.  For every $\eps>0$, there exists a randomized mechanism $M$ with
output in $\R^\cQ$ that is pure $\eps$-differentially private and satisfies
\[
  \sup_{x\in\cD^n}
  \E\norm{M(x)-F(x)}_\infty
  \le 129e\min\left\{1,
  \sqrt{\frac{\log(2T)\log(2k)}{\eps n}}\right\}.
  \tag{1.3}\label{eq:main-rate}
\]
\end{theorem}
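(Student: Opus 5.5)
We take the envelope mechanism of the introduction applied to a selection-only PMW transcript and split the proof into a privacy part and a utility part.

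\textbf{Trivial regime.} If $\log(2T)\log(2k)\ge \eps n$, the right side of \eqref{eq:main-rate} is at least $129e$. The mechanism that outputs $0\in\R^\cQ$ is then $0$-DP and has error at most $1$, so we may assume the square root is below $1$.

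\textbf{The PMW transcript.} We run $R\approx \sqrt{\eps n\log(2k)}/\sqrt{\log(2T)}$ rounds, or more precisely $R$ chosen so that $\log(2T)/(\eta R)+\eta$ balances at $\alpha\defeq\sqrt{\log(2T)\log(2k)/(\eps n)}$. Each round applies the exponential mechanism with parameter $\eps_0\asymp \eps/\sqrt{R}$ over the $2k$ signed queries, scored by $n(\langle q,\hat x\rangle-\langle q,h_t\rangle)$. The data-independent multiplicative-weights update then produces $h_{t+1}$. The transcript $\omega=(s_1,\dots,s_R)$ is decoded deterministically to the answer vector of the average hypothesis $\bar h$.

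Standard PMW regret analysis gives $\E_{P_x}\norm{F(\bar h(\omega))-F(x)}_\infty\lesssim\alpha$. This uses the KL potential $\log(2T)$, per-round exponential-mechanism error $\log(2k)/(\eps_0 n)$, and the step size $\eta$. This choice of $\eps_0$ gives only concentrated or Rényi DP: the Rényi divergence of order $\lambda$ between $P_x$ and $P_y$ is at most $c\lambda\,\eps_0^2R\,d_H(x,y)^2$. This is the composition bound the envelope framework relies on.

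\textbf{Envelope and privacy.} We form $\widetilde p_x$ as in \eqref{eq:intro-envelope}, set $Z_x=\sum_\omega\widetilde p_x(\omega)$, and release $\omega\sim \widetilde p_x/Z_x$, decoded as above. The triangle inequality for $d_H$ gives $\widetilde p_x\le e^{\eps/2}\widetilde p_{x'}$ pointwise for neighbors, so $Z_x\le e^{\eps/2}Z_{x'}$. Hence the ratio is at most $e^\eps$, and post-processing preserves pure DP.

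\textbf{Utility.} Since $\widetilde p_x\ge p_x$, we have $Z_x\ge1$. Writing $\mathrm{err}(\omega)\le 2$ for the decoded error,
\[
\E\,\mathrm{err}\le \sum_\omega \widetilde p_x(\omega)\mathrm{err}(\omega)\le \sum_{r\ge0} e^{-\eps r/2}\sum_\omega \max_{d_H(x,y)=r}p_y(\omega)\,\mathrm{err}(\omega).
\]
We group radii into blocks $[j/\eps,(j+1)/\eps)$ and must bound $\sum_\omega\max_{y\in B(x,\rho)}p_y(\omega)\mathrm{err}(\omega)$, paid against the weight $e^{-j/2}$ with no $1/\eps$ factor lost.

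This step is the main obstacle. The plan is a likelihood-level Maurey argument. We have $p_y(\omega)=\prod_t \exp(\eps_0 n\,\mathrm{score}_t(y)/2)/Z_t(y)$, and $\hat y$ enters only linearly through $\langle q,\hat y\rangle$. So $\hat y$ can be replaced by an empirical average $\tilde y$ of $m$ rows sampled from $\hat y$. Choosing $m\approx \log(2k)/\alpha^2$ keeps $|\langle q,\hat y-\tilde y\rangle|\lesssim\alpha$ across the queries that matter, at multiplicative cost $e^{O(\eps_0 n\alpha R)}$ in the likelihood.

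Then $\max_{y\in B}p_y\le \sum_{\tilde y\in\mathcal N}C\,p_{\tilde y}$ over a net $\mathcal N$ of size at most $\binom{\cdot}{\cdot}$, with $\log|\mathcal N|\lesssim \rho\log(2T)$ for $\rho$ rows changed. Each auxiliary law $p_{\tilde y}$ is a PMW law on a database within distance $\approx\rho$ of $x$.

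For nearby balls ($\rho\lesssim 1/\eps$) we convert error-weighted mass under $p_{\tilde y}$ to mass under $p_x$ by Hölder with the Rényi bound. This gives $\sum_\omega p_{\tilde y}\mathrm{err}\le (\E_{p_x}\mathrm{err}^{\lambda'})^{1/\lambda'}\exp(O(\lambda\eps_0^2R\rho^2))$, which is $O(\alpha)$ after using high-probability accuracy of PMW. For distant balls we use only the trivial bound: total mass at most $|\mathcal N|\cdot C$. This is beaten by $e^{-\eps\rho/2}$ once $\eps\rho\gg \rho\log(2T)\cdot(\text{per-row cost})$, which holds after we scale $\eps_0 n\alpha R\lesssim$ a small multiple of $\eps$ via the parameter choice.

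Summing the geometric series over blocks gives total error $O(\alpha)$. Tracking constants should then yield $129e$.
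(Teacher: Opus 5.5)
There is a genuine gap, at the step you yourself flag as the main obstacle. Your Maurey step approximates all of $\hat y$ by an $m$-row sample $\tilde y$ with $m\approx\log(2k)/\alpha^2$, accurate to $\alpha$ on every query. It then pays the worst-case likelihood factor $e^{O(\eps_0 n\alpha J)}$, where $J$ is your number of rounds (your $R$). This is the classical small-database net argument, i.e.\ exactly the linear-composition cost the envelope is meant to avoid. The selection error $\log(2k)/(\eps_0 n)$ must be $\lesssim\alpha$, and Hedge needs $J\gtrsim\log T/\alpha^2$. So the exponent $\eps_0 n\alpha J$ is $\Omega(\log T\log(2k)/\alpha^2)$, which is $\Omega(\eps n)$ at the target $\alpha$. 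This swamps the penalty $e^{-\eps\rho/2}$ on every ball of radius $\rho\ll n$, so your requirement $\eps_0 n\alpha J\lesssim c\eps$ cannot be met.

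The later claims do not follow from this construction either.
\begin{itemize}
\item $a(\tilde y)$ is only within $\alpha$ of $a(y)$, so it can sit at target distance $\rho+n\alpha/2$ from $a(x)$, with $n\alpha\gg1/\eps$. It is not at distance $\approx\rho$, so the R\'enyi--H\"older step gives nothing.
\item The net $\{\tilde y\}$ has $T^m$ elements independent of $\rho$, not $\log\abs{\mathcal N}\lesssim\rho\log(2T)$.
\item Enumerating the true ball instead costs $\binom{n}{\rho}(T-1)^\rho$, whose $\rho\log(n/\rho)$ term no penalty can pay.
\item H\"older must be applied to the maximum over the whole family, at cardinality cost $\abs{\mathcal N}^{1/\lambda}$, not to a single $p_{\tilde y}$.
\end{itemize}

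The missing idea is the paper's likelihood-level Maurey lemma (\cref{lem:likelihood-maurey}). You sample \emph{row moves relative to $x$}, and pay the sampling error only in expectation, using the exact log-sum-exp form of a fixed-transcript likelihood.
\begin{itemize}
\item Construction: for $d_H(x,y)\le\rho$, list the vectors $(s(y_i)-s(x_i))_{s\in\cS}$ over the differing rows and pad with zeros to length $\rho$. Draw $W_1,\dots,W_m$ uniformly from the list and set $V=a(x)+\frac{\rho}{mn}\sum_iW_i$. Then $\E V=a(y)$ exactly, $d_*(V,a(x))\le\rho$, and $V$ ranges over a family of size at most $(T^2+1)^m$ that depends only on $(x,\rho,m)$.
\item Domination: with $\omega$ fixed, $\log p_a(\omega)$ is, up to a constant, linear in $a$ minus $J$ log-partition functions. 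The linear part is exact in expectation. Jensen for $\log$ plus Hoeffding's lemma bound each log-partition excess by $2\eta^2\rho^2/(mn^2)$, with $\eta\asymp\eps_0 n$. Hence $\log p_y(\omega)\le\E\log p_V(\omega)+\kappa\rho^2/m$ with $\kappa=2J\eta^2/n^2$. By AM--GM, $p_y(\omega)\le e^{\kappa\rho^2/m}\max_v p_v(\omega)$, with no uniform approximation over the $k$ queries.
\item Growth: balance $\kappa\rho^2/m$ against the code cost $mL$, where $L=\log(T^2+1)$. Near balls use $m=1$ plus an integer likelihood moment; far balls use $\max\le\sum$. This gives ball growth $e^{5\rho\sqrt{\kappa L}}$, which the penalty beats once $\kappa L\le\eps^2/900$, i.e.\ $J\eps_0^2\lesssim\eps^2/\log T$.
\end{itemize}

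Your calibration $\eps_0\asymp\eps/\sqrt J$ misses this factor of $\log T$. With it, the PMW error terms would balance at $\alpha\asymp(\log(2k)\sqrt{\log(2T)}/(\eps n))^{1/2}$. That is below the known lower bounds, so no ball-growth argument could close at those parameters.

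Finally, for far balls, a mass bound times $\mathrm{err}\le2$ leaves an $O(1)$ error contribution from radii $\Theta(1/\eps)$, not $O(\alpha)$. You need the robust estimate $\E_{P_v}\ell_x\le\alpha_0+4\rho/n$ for every surrogate target $v$. This is why the paper analyzes PMW at antisymmetric targets that are not datasets.
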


\begin{theorem}[Envelope recovery of the normalized Euclidean rate]
\label{thm:main-l2}
Under the hypotheses of \cref{thm:main}, there exists a randomized mechanism
$M_2$ with output in $\R^\cQ$ that is pure $\eps$-differentially private and
satisfies
\[
  \sup_{x\in\cD^n}
  \frac1{\sqrt{k}}\E\norm{M_2(x)-F(x)}_2
  \le62e\min\left\{1,
  \sqrt{\frac{\log(2T)}{\eps n}}\right\}.
  \tag{1.4}\label{eq:main-l2-rate}
\]
\end{theorem}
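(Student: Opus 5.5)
We prove \cref{thm:main-l2} with the envelope construction of \eqref{eq:intro-envelope}, applied to a PMW variant whose only private choice in each round is a sign. Fix a round count $R$ and a learning rate $\eta$. Start from the uniform distribution $h_0$ on $\cD$. In round $t$, sample $q_t\in\cQ$ uniformly and independently of the data. Then select a sign $s_t\in\{\pm1\}$ by the exponential mechanism with score $s\,(F_{q_t}(x)-\langle q_t,h_{t-1}\rangle)$ and parameter $\eps_0$. Update $h_t\propto h_{t-1}e^{\eta s_t q_t}$. The released answer is the average hypothesis $\bar h=\frac1R\sum_t h_t$, decoded to $(\langle q,\bar h\rangle)_q$. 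Let $p_x(\omega)$ be the law of the transcript $\omega=(q_t,s_t)_{t\le R}$. We release a transcript drawn from $\widetilde p_x/\sum_\omega\widetilde p_x(\omega)$ and decode it deterministically. Privacy is the argument sketched in the introduction. Each discounted term $e^{-(\eps/2)d_H(x,y)}p_y(\omega)$ changes by at most $e^{\eps/2}$ when $x$ moves by one row. So does the maximum and so does the normalizer, which gives pure $\eps$-DP. Postprocessing preserves this.

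For utility, write $\mathrm{err}(\omega)=k^{-1/2}\norm{\mathrm{dec}(\omega)-F(x)}_2\le 2$. The quantity to bound is $\sum_\omega\widetilde p_x(\omega)\mathrm{err}(\omega)\big/\sum_\omega\widetilde p_x(\omega)$. Since $\widetilde p_x\ge p_x$, the denominator is at least $1$, so it suffices to bound the numerator. Split the max over $y$ into shells by radius $r=d_H(x,y)$, and group the radii into blocks of width $\lceil1/\eps\rceil$. Within a block the discount changes by at most a constant factor. Hence
\[
\sum_\omega\widetilde p_x(\omega)\mathrm{err}(\omega)\le \sum_{j\ge0} e^{-j/2+O(1)}\sum_\omega \max_{d_H(x,y)\le (j+1)/\eps} p_y(\omega)\,\mathrm{err}(\omega).
\]

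The main obstacle is bounding $\max_{y\in B}p_y(\omega)$ over a Hamming ball $B$, which contains exponentially many $y$. I would first try a likelihood-level Maurey argument. For every $y\in B$, the dependence of $p_y(\omega)$ on $y$ enters only through $F_{q_t}(y)=\langle q_t,\mu_y\rangle$. Here $\mu_y$ differs from $\mu_x$ by a signed measure of total variation at most $2r/n$. Maurey sparsification approximates $\mu_y-\mu_x$ by an average of $m\approx (r/n)^2\cdot\mathrm{poly}(\eta,R)$ atoms, uniformly over the $\le R$ queries that appear in $\omega$. Therefore $\max_{y\in B}p_y(\omega)$ is at most a constant times $\sum_{z\in\mathcal Z}p_z(\omega)$, where $\mathcal Z$ is a family of $T^{O(m)}$ auxiliary laws. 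These laws are PMW transcript laws with perturbed empirical measure, and the exponential-mechanism scores absorb the approximation error.

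I then bound $\sum_\omega p_z(\omega)\mathrm{err}(\omega)$ in two regimes. For nearby balls, $r\lesssim\sqrt{n/\eps}$, use Hölder to get $\E_{p_x}[(p_z/p_x)\,\mathrm{err}]\le \E_{p_x}[\mathrm{err}^{2}]^{1/2}\exp(\Ren_2(P_z\|P_x)/2)$. The Rényi divergence between sign-selection PMW laws differing in $r$ rows is $O(R\eps_0^2 r^2)$ by composition, so the factor $T^{O(m)}$ is paid against $e^{-j/2}$. For distant balls, use the direct mixture bound $\sum_\omega p_z\,\mathrm{err}\le\sup\mathrm{err}\le 2$ together with the geometric decay $e^{-j/2}$ and a count of auxiliary laws. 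This keeps the block sum $O(1)$ times the base PMW error. Without the grouping by $1/\eps$, this step would lose a factor of $1/\eps$.

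For the base $\ell_2$ error, I use the standard mirror-descent regret argument. The KL potential $\KL(\mu_x\|h_t)$ starts at most $\log T$. A uniform random $q_t$ with the correct sign decreases it by $\eta\,|F_{q_t}(x)-\langle q_t,h_{t-1}\rangle|-\eta^2/2$, minus the sign-error cost $O(1/(\eps_0 n))$. Averaging gives
\[
k^{-1/2}\E\norm{\bar h-F(x)}_2\lesssim \log T/(\eta R)+\eta+1/(\eps_0 n).
\]
Choose $\eps_0\asymp\eps$, $\eta\asymp\sqrt{\log(2T)/(\eps n)}$ and $R\asymp \eps n\cdot$const, so that the Rényi terms stay $O(1)$. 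Taking the $\min\{1,\cdot\}$ handles the degenerate regime trivially, and tracking constants yields $62e$.
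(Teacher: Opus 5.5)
Your envelope, privacy argument, reduction to the unnormalized numerator via $Z_x\ge1$, and radial blocking all match the paper. The genuine gap is in the base transcript and its utility analysis.

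You select the sign by an exponential mechanism with per-round parameter $\eps_0\asymp\eps$, i.e.\ logit scale $\asymp\eps_0 n$. Your sign-error charge $1/(\eps_0 n)$ and your R\'enyi bound $O(R\eps_0^2r^2)$ both presuppose this. You run it for $R\asymp\eps n$ rounds, so the composed moment scale is $\kappa\asymp R\eps_0^2\asymp\eps^3 n$. The R\'enyi terms are therefore not $O(1)$: at radius $r\approx1/\eps$ they are already $\asymp\eps n$. The ball growth $e^{O(r\sqrt{\kappa L})}=e^{O(\eps r\sqrt{\eps nL})}$ then overwhelms the discount $e^{-\eps r/2}$ whenever $\eps nL\gg1$, whereas the blocked sum needs $\kappa L\le c\,\eps^2$, cf.\ \eqref{eq:l2-composition}.

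Rebalancing $\eps_0$ and $R$ does not rescue the argument, because it controls the wrong norm. The step ``a correct sign gains $\eta\abs{F_{q_t}(x)-\langle q_t,h_{t-1}\rangle}$'' bounds the normalized $\ell_1$ error $\frac1k\E\norm{\cdot}_1$. Since $\frac1k\norm{z}_1\le k^{-1/2}\norm{z}_2$, this inequality goes the wrong way for your displayed $\ell_2$ bound. Via $z_q^2\le2\abs{z_q}$ you only get a square root. Under the constraint $R\eps_0^2L\lesssim\eps^2$ with a $1/(\eps_0 n)$ sign-error charge, the best normalized Euclidean rate this accounting yields is of order $(\log T/(\eps n))^{1/4}$.

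The missing idea is a \emph{soft} sign kernel of constant temperature, $\Prb(\sigma)\propto\exp\{(\sigma/4)(a_q-q(\mu))\}$ as in \eqref{eq:l2-sign-kernel]}. At that temperature $\eps_0 n=\Theta(1)$, so your charge $1/(\eps_0 n)$ is useless; the analysis must instead use $r\tanh((r+\delta)/4)\ge r^2/24-3\delta^2/8$ from \eqref{eq:robust-tanh}. The expected Hedge gain is then quadratic in the residual, which bounds the \emph{squared} normalized Euclidean error directly, \eqref{eq:l2-robust-utility}. At the same time a changed row moves each logit by only $O(1/n)$, so $\kappa_2=J/(8n^2)$. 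With $J\asymp\log T/\alpha^4$ and $\gamma\asymp\alpha^2$, the growth condition $\kappa_2\le\eps^2/(900L)$ reduces to $\eps n\alpha^2\gtrsim\log T$. That is where the square-root rate comes from.

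The ball estimates, as you sketch them, also do not close. Write $J$ for the number of rounds (your $R$) and $r$ for the radius.

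First, consider a Maurey step that approximates $\mu_y-\mu_x$ uniformly over the queries in $\omega$, well enough to lose only a constant factor in $p_y(\omega)$. It must control first-order score errors that accumulate over all $J$ rounds. So $m$ must scale quadratically in both $r$ and $J$, and the family size $T^{O(m)}$ grows like $e^{\Theta(r^2)}$ in the radius. The linear discount $e^{-\eps r/2}$ cannot pay for that at large radii. \Cref{lem:likelihood-maurey} avoids this in two steps:
\begin{itemize}[leftmargin=2.2em]
  \item For a fixed transcript, $a\mapsto-\log p_a(\omega)$ is a sum of log-sum-exps minus a linear term, and the random surrogate has mean exactly $a(y)$. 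Jensen and Hoeffding then cancel the first-order terms, leaving the cost $e^{\kappa r^2/m}$ with $\kappa$ linear in $J$.
  \item The code length is chosen as $m\approx\sqrt{\kappa r^2/L}$, balancing this cost against the cardinality cost $mL$. This gives ball growth $e^{O(r\sqrt{\kappa L})}$, linear in $r$.
\end{itemize}

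Second, in the near regime, an order-$2$ H\"older bound summed over the family (or even with the maximum inside the $L_2$ norm) pays a cardinality factor of order $\abs{\cV}^{1/2}$, which can be as large as $\sqrt{T^2+1}$. It pays this already in the innermost block, where the discount is only $e^{-\eps/2}$, so $Z_x$ and the error can pick up a factor of order $T$. The paper instead lets the moment order grow like $\sqrt{L/(\kappa r^2)}$, so the cardinality cost is at most $r\sqrt{\kappa L}$ (\cref{sec:near-max}).
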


Both mechanisms constructed here have finite output support.  They are information-theoretic:
the transcript spaces and the maximum in \eqref{eq:intro-envelope} are finite,
but they may be exponentially large.

\subsection{Construction and proof strategy}
\label{sec:overview}

Each mechanism consists of three steps.  They share the envelope and decoder;
their base transcript laws differ only in how the signed query is sampled.
Likelihood domination bounds the utility loss introduced by the second step.

\begin{enumerate}[leftmargin=2.2em]
  \item \emph{Define the PMW transcript distribution.}
  Starting from the uniform synthetic distribution on $\cD$, update in the
  direction of one signed query per round.  For \cref{thm:main}, an
  exponential mechanism selects the signed query with largest discrepancy.
  For \cref{thm:main-l2}, the ordinary query is public uniform randomness and
  a two-point exponential mechanism selects only its sign.  Either transcript
  records query indices and signs, but no numerical answers.

  \item \emph{Replace the transcript distribution by its privacy envelope.}
  For every possible transcript, compute \eqref{eq:intro-envelope}, sum these
  weights to obtain a normalizer $Z_x$, and sample the transcript from
  $\widetilde p_x/Z_x$.  This is the transcript sampled by the private core of
  the mechanism.

  \item \emph{Decode publicly.}  The signed queries determine every PMW
  update, so the entire synthetic trajectory can be replayed from the
  transcript.  Output the query answers of the average synthetic
  distribution.  This deterministic decoding does not spend privacy.
\end{enumerate}

The first step supplies utility, but its direct pure-DP bound exceeds the
target privacy budget.  The second step supplies pure privacy but can distort
the base distribution.  The proof is therefore about showing that the
distortion is small where the decoder's error is large.

\paragraph{The utility quantity.}
Fix the true database $x$, and let $\ell_x(\omega)$ denote the worst-coordinate
error of the answer decoded from transcript $\omega$.  Since the normalized
mechanism samples with probability $\widetilde p_x(\omega)/Z_x$, its expected
error is
\[
  \frac{1}{Z_x}\sum_\omega
    \widetilde p_x(\omega)\ell_x(\omega).
\]
The pointwise majorant contains the accurate base mass $p_x$, and this also
shows $Z_x\ge1$.  Normalization can nevertheless shift probability toward
inaccurate transcripts.  It remains to bound the extra unnormalized mass and
extra error-weighted mass contributed by databases $y\ne x$.

\paragraph{Likelihood-level domination for a Hamming ball.}
Consider databases within Hamming distance $R$ of $x$.  Their query-answer
vectors differ from that of $x$ by averages of at most $R$ elementary row
replacements.  For the utility analysis, a likelihood-level Maurey argument
upper-bounds the maximum transcript likelihood over every database in this
ball by the maximum over a small family of auxiliary PMW laws.  The family is
obtained by sampling $m$ row replacements and has at most $(T^2+1)^m$
descriptions.  Although the sampling law for a particular database depends on
that database, its support family depends only on $x$, $R$, and $m$.  A
fixed-transcript log-sum-exp and Hoeffding estimate gives the pointwise
domination, at multiplicative cost
\[
  \exp\!\left\{\frac{\kappa R^2}{m}\right\},
  \qquad \kappa=\frac{2J\eta^2}{n^2}.
\]
Here $J$ is the number of PMW rounds and $\eta$ is the exponential-mechanism
selection strength.
The auxiliary answer vectors need not correspond to real databases, and a
maximizing vector may depend on the fixed transcript.  This is why the PMW
analysis is first extended to antisymmetric surrogate targets; they are
analytical interpolation points, never possible outputs or private inputs.

\paragraph{Why there are two distance regimes.}
After this domination, two different estimates are available.  For a nearby
ball, finite-order Rényi bounds control the maximum likelihood over the
codebook, and Hölder's inequality transfers the base PMW $L_2$ error bound.
For a distant ball, this moment estimate becomes expensive.  There it is
better to replace the maximum by the sum of the codebook likelihoods; each is
a probability distribution of total mass one.  Optimizing the codebook size
in either regime gives ball growth
\[
  \exp\{O(R\sqrt{\kappa\log(T^2+1)})\}.
\]
The PMW parameters are chosen so that this growth is much slower than the
envelope penalty $e^{-\eps R/2}$.

Here is the quantitative interface proved later.  Define the radius-$R$ ball
maximum
\[
  H_R(\omega)=\max_{y:d_H(x,y)\le R}p_y(\omega),
  \qquad
  \kappa=\frac{2J\eta^2}{n^2},
  \qquad L=\log(T^2+1).
\]
The moment and mixture regimes together give
\[
 \sum_\omega H_R(\omega)\le e^{5R\sqrt{\kappa L}},
 \qquad
 \sum_\omega H_R(\omega)\ell_x(\omega)
 \le e^{5R\sqrt{\kappa L}}
      \left(\alpha_0+\frac{4R}{n}\right).
\]

\paragraph{Why radii are blocked.}
Summing the resulting estimate separately over every integer radius would
still introduce an unwanted factor $1/\eps$.  We instead group radii into
blocks of width $\Theta(1/\eps)$.  Within a block, the outer radius bounds the
likelihood growth while the inner radius supplies the privacy penalty.  The
block weights then have constant total mass and constant block-index-weighted
mass.  Concretely, for
$B=\max\{1,\lceil3/\eps\rceil\}$ and $R_j=(j+1)B$, the envelope satisfies
the pointwise block bound
\[
  \widetilde p_x(\omega)
  \le p_x(\omega)+\sum_{j=0}^{n-1}
  e^{-(\eps/2)(jB+1)}H_{R_j}(\omega).
\]
Since a block has width $\Theta(1/\eps)$, multiplying its radius by the
$1/n$ contribution to error gives $O(1/(\eps n))$.  The resulting explicit
bounds are
\[
  1\le Z_x\le1+2e,\qquad
  \sum_\omega\widetilde p_x(\omega)\ell_x(\omega)
    \le(1+2e)\alpha_0+\frac{144e}{\eps n},
\]
where $\alpha_0$ is the base PMW error.  Balancing the PMW round, selection,
and update parameters makes
$\alpha_0=O(\sqrt{\log T\log k/(\eps n)})$ and absorbs the remaining linear
term.

Sections~\ref{sec:pmw}--\ref{sec:parameters} present these components and their
analysis in order: base transcript, privacy envelope, Hamming-ball likelihood
domination, near/far estimates, blocked summation, and parameter choice.  The
explicit parameter verification is isolated in \cref{sec:parameters}; the
preceding sections establish the mechanism and structural estimates.
Section~\ref{sec:l2} then gives the uniform-query modification and the proof
of \cref{thm:main-l2}; it reuses the envelope analysis and records only the
places where the utility and likelihood estimates change.

\subsection{Context and comparison with prior work}
\label{sec:related}

The closest results differ in privacy notion, error criterion, or
computational model.

\begin{center}
\begin{tabular}{@{}
  >{\raggedright\arraybackslash}p{0.22\linewidth}
  >{\raggedright\arraybackslash}p{0.31\linewidth}
  >{\raggedright\arraybackslash}p{0.37\linewidth}@{}}
\toprule
Line of work & Guarantee & Relation to this result \\
\midrule
Prior pure-DP upper bounds
  & $O((\log T\log k/(\eps n))^{1/3})$ worst-coordinate error
  & Net mechanisms, pure PMW, and MWEM retain exponent $1/3$
    \cite{blum2013learning,hardt2011thesis,hardt2012simple}. \\
Known lower bounds
  & Square-root worst-coordinate error in nondegenerate regimes
  & These lower bounds are stated under explicit high-dimensional parameter
    conditions
    \cite{hardt2011thesis,lyu2025fingerprinting}. \\
Approximate-DP PMW/MWEM
  & Square-root-type dependence on $1/n$, with $\delta$-dependent factors
  & These guarantees permit $\delta>0$; MWEM also has the separate pure
    cube-root analysis
    \cite{hardt2010multiplicative,gupta2012iterative,hardt2012simple}. \\
Metric extension
  & Extends an already pure metric-DP mechanism at factor-two privacy cost
  & Its inherited utility guarantee is restricted to the starting subset
    \cite{borgs2018private}. \\
Nikolov's JL mechanism
  & The conjectured normalized $\ell_2$ rate under pure DP, independent of $k$
  & Efficiently resolves Open Problem~2 \cite{nikolov2023jl}. \\
This work
  & The conjectured expected-$\ell_\infty$ rate under pure DP
  & Resolves Open Problem~1; also gives an alternative, formally verified
    envelope proof of the known normalized-$\ell_2$ rate. \\
\bottomrule
\end{tabular}
\end{center}

Known generic pure-DP mechanisms have cube-root worst-coordinate
error: the net mechanism, interactive pure PMW, and MWEM give
$O((\log T\log k/(\eps n))^{1/3})$, up to their stated probability terms
\cite{blum2013learning,hardt2011thesis,hardt2012simple}.  Convex-geometric,
factorization, Johnson--Lindenstrauss, and instance-optimal mechanisms give
strong workload-sensitive or average-error guarantees
\cite{nikolov2013geometry,edmondsEtAl2020factorization,
nikolov2023jl,blasiok2019instance}.  PREM gives relative-error synthetic
release under pure and approximate DP; that work explicitly relates its
purely additive gap to the universal pure-DP problem
\cite{ghaziEtAl2025prem}.  Nikolov, Tang, and Ullman's online factorization
theorem treats nonadaptively chosen query streams using a Gaussian
$(\eps,\delta)$-DP mechanism, while their adaptive small-dataset result also
uses approximate DP \cite{nikolovEtAl2026online}.  These results address
different error objectives, privacy notions, or interaction models from the
offline pure-DP guarantee considered here.

For normalized Euclidean error, the general pure-DP upper bound recorded when
the open problem was posed was
$\alpha\lesssim(\log^2 k\,\log^{3/2}T/(\eps n))^{1/2}$
\cite{nikolov2013geometry,nikolov2021openproblem}.  Nikolov subsequently
replaced it by $(\log T/(\eps n))^{1/2}$, with no dependence on $k$, using an
efficient Johnson--Lindenstrauss mechanism \cite{nikolov2023jl}.
\Cref{thm:main-l2} recovers this rate through the transcript-envelope route.

A contemporaneous study by Ghazi, Guzm\'an, Kamath, Knop, Kumar, and
Manurangsi gives fixed-parameter tractable synthetic-data algorithms when the
query-incidence graph has bounded treewidth.  Its general pure-DP guarantee
retains the cube-root sparse-regime term and it separately identifies the
optimal universal pure-DP rate as open, so its computational and structural
contribution is complementary to the rate proved here
\cite{ghaziEtAl2026fpt}.

For worst-coordinate error, PMW, boosting, and iterative database
constructions explain why $n^{-1/2}$-type accuracy is possible once
approximate privacy is allowed
\cite{dwork2010boosting,hardt2010multiplicative,gupta2012iterative,hardt2012simple}.
Hardt's lower-bound proof is a discrete adaptation of the geometric packing
argument of Hardt and Talwar \cite{hardt2010geometry,hardt2011thesis}.  For
every fixed $c>0$, it constructs, for fixed $\eps$, sufficiently large $n$,
and $k>n^{1+c}$, a workload with error
\[
  \Omega\!\left(
  \sqrt{\frac{\log k\,\log(T/n)}{\eps n}}
  \right)
\]
with constant probability \cite{hardt2011thesis}.  Thus it is also an
expected-error lower bound up to a constant.  When
$T\ge n^{1+c'}$ for a fixed $c'>0$, as in the usual high-dimensional
specialization, $\log(T/n)=\Theta(\log T)$.  Building on the
fingerprinting-code lower-bound program of Bun, Ullman, and Vadhan
\cite{bun2014fingerprinting}, Lyu and Talwar more recently prove, under
explicit high-dimensional hypotheses, that target error $\alpha$ requires
\[
 n=\Omega\!\left(
 \frac{\sqrt{\log T\log(1/\delta)}\,\log k}{\eps\alpha^2}
 \right)
\]
for $\ell_\infty$ release \cite{lyu2025fingerprinting}.  A pure-DP mechanism
is also $(\eps,\delta)$-DP; choosing $\delta=2/T$ in the admissible regime
recovers $\alpha=\Omega(\sqrt{\log T\log k/(\eps n)})$.  These lower bounds
establish tight square-root dependence under their stated high-dimensional
assumptions.

Lin, Wang, Ma, and Wang recently gave a randomized-postprocessing framework
that, under stated conditions, purifies approximate DP
\cite{lin2025purifying}.  Their query-release instantiation runs MWEM,
subsamples the resulting synthetic database to control the output-space cost
of purification, and then applies a binary encoding; for
$\cD=\{0,1\}^d$, its guarantee is
$\widetilde{O}((d/(n\eps))^{1/3})$.  Thus this black-box route retains the
cube-root parameter balance.  The present mechanism is not a postprocessing
of an approximate-DP release.  Instead, it uses the exact convex form of a
fixed PMW transcript likelihood to prove the Maurey domination in
\cref{lem:likelihood-maurey}.  This is the step that turns local moment control
into a globally pure law without losing the desired rate.

The envelope uses the same factor-two normalization accounting as the
private-extension theorem of Borgs, Chayes, Smith, and Zadik
\cite{borgs2018private,borgsEtAl2018revealing}.  Applying that theorem with
$H=\cD^n$ would first require pointwise privacy of the base family; for proper
$H$, its inherited utility statement is confined to $H$.  For our PMW
reference family, a direct
pointwise privacy-loss bound costs $4J\eta/n$ and yields the cube-root
tradeoff.  At the square-root parameters we instead retain finite-order
likelihood-ratio moments and prove directly that the resulting envelope has
bounded normalization overhead and error-weighted mass.  The factor-two
geometry is inherited; the proof-specific ingredients that produce the new
rate are the likelihood-level Maurey domination and blocked utility analysis.

\section{Analytic tools}

Write
\[
  d_H(x,y)=\abs{\{i:x_i\ne y_i\}}
\]
for replacement Hamming distance on $\cD^n$.  For a probability distribution
$\mu$ on $\cD$ and a function $h:\cD\to\R$, write
$h(\mu)=\sum_{d\in\cD}\mu(d)h(d)$.  Thus $q(\mu)$ and $s(\mu)$ are ordinary
expectations of a query and a signed query.  For a probability mass $P$ on a
finite set, $\E_PG=\sum_\omega P(\omega)G(\omega)$ and
$\norm{G}_{L_r(P)}=(\E_P\abs{G}^r)^{1/r}$ for $1\le r<\infty$.  All
logarithms are natural.

All probability spaces below are finite, so probabilities and expectations
reduce to finite sums.  The analysis concerns how transcript likelihoods
change with the database and how many likelihoods must be controlled
simultaneously.

For probability masses $P,Q$ on a common finite support, the order-$r$ Rényi
divergence, $r>1$, is
\[
  \Ren_r(P\Vert Q)
  =\frac1{r-1}\log\sum_\omega
       Q(\omega)\left(\frac{P(\omega)}{Q(\omega)}\right)^r.
\]
We use Rényi divergence as a compact way to record moments of the likelihood
ratio $P/Q$, equivalently the exponentiated privacy loss; the final mechanism
is proved pure DP directly
\cite{mironov2017renyi,bun2016concentrated}.

We record four finite-space lemmas, stated and proved in
\cref{app:elementary-proofs}.
First, a log-likelihood ratio bounded in $[-a,a]$ has
$\Ren_r(P\Vert Q)\le ra^2/2$, and these bounds add over adaptive rounds
(\cref{lem:bounded-lr}).  This turns the sensitivity of one PMW selection into
a transcript moment bound.  Second, Hedge regret is at most
$\log(T)/\gamma+\gamma J/2$ (\cref{lem:hedge}); this is the deterministic
potential argument behind PMW utility.  Third, a softmax draw from $K$ scores
has gap tail $Ke^{-\eta u}$ and $L_2$ gap at most
$4\log(2K)/\eta$ (\cref{lem:softmax-gap}); this controls the error from
privately selecting a violated query.  Finally, if
$\Ren_a(P_i\Vert P_0)\le aA$, Hölder's inequality bounds the maximum of $N$
likelihoods by a factor
$\exp\{\log(N)/a+(a-1)A\}$ times the conjugate $L_b(P_0)$ norm, where
$b=a/(a-1)$
(\cref{lem:renyi-max}).  The bounded-ratio and Rényi-maximum lemmas record
the standard real-order route for context.  Sections~5--8 do not invoke
\cref{prop:transcript-renyi,lem:renyi-max}: the load-bearing proof instead
derives exactly the required integer moments in
\cref{lem:integer-likelihood-moment} and carries out the maximum-and-Hölder
calculation inline in \eqref{eq:near-max}.  This is also the route formalized
in Lean.

\section{The signed selection-only PMW transcript}
\label{sec:pmw}
\label{sec:base}

This section constructs the accurate base transcript previewed in
\cref{sec:overview}.  PMW maintains a synthetic distribution and
repeatedly searches for a query on which its answer disagrees with the private
database.  Some PMW presentations additionally release a noisy numerical
answer at each round.  Here the transcript omits these measurements: once a
signed query has been selected, a fixed Hedge update determines the next
synthetic distribution.  The finite transcript therefore contains only query
indices and signs.

This distinction matters.  MWEM selects an inaccurate query by the
exponential mechanism and separately releases a Laplace-noised numerical
answer; basic composition gives its pure-DP cube-root guarantee, while
advanced composition gives a separate approximate-DP analysis
\cite{hardt2012simple}.  Our selection rule is likewise an exponential
mechanism over signed queries \cite{mcsherry2007mechanism}, but it releases no
measurement.  The resulting finite transcript law is an analytic input to
the envelope, not itself the final private mechanism.

To encode both the inaccurate coordinate and the direction of its error, give
each query two signs.  It is useful to retain distinct query indices even when
two queries coincide as functions.  Define
\[
  \cS=\cQ\times\{-1,+1\},\qquad K=\abs{\cS}=2k.
\]
For $s=(q,\sigma)\in\cS$, write $s(d)=\sigma q(d)$ and
$-s=(q,-\sigma)$.  Thus $s:\cD\to[-1,1]$ and the involution has no fixed
point.  The positive copy is favored when the synthetic answer to $q$ is too
small; the negative copy is favored when it is too large.

An \emph{admissible target} is an antisymmetric signed-answer vector
$a=(a_s)_{s\in\cS}$ satisfying $a_{-s}=-a_s$.  A dataset $x$ induces the
target
\[
  a(x)_s=\frac1n\sum_{i=1}^n s(x_i).
  \tag{3.1}\label{eq:dataset-target}
\]
Admissible targets serve as analytical inputs to transcript distributions.
They need not lie in $[-1,1]^{\cS}$ or equal $a(y)$ for any database $y$;
they are neither private inputs nor released outputs.  Some will arise by
averaging row changes.  The PMW selection rule depends only on the target
coordinates, so it remains well-defined at these interpolation points.  To
measure their scale, define
\[
  d_*(a,b)=\frac n2\max_{s\in\cS}\abs{a_s-b_s}.
  \tag{3.2}\label{eq:target-distance}
\]
Then $d_*(a(x),a(y))\le d_H(x,y)$: replacing one row changes every ordinary
query answer by at most $2/n$, and the factor $n/2$ converts this answer-space
change back to Hamming units.

Fix a number of rounds $J\ge1$, selection parameter $\eta>0$, and update
parameter $0<\gamma\le1$.  Set $\mu_0$ to be uniform on $\cD$.  After
selecting $s_t$, update the synthetic distribution by
\[
  \mu_t(d)=\frac{\mu_{t-1}(d)e^{\gamma s_t(d)}}
  {\sum_{z\in\cD}\mu_{t-1}(z)e^{\gamma s_t(z)}}.
  \tag{3.3}\label{eq:hedge-update}
\]
At round $t$, the preceding selections have fixed $\mu_{t-1}$.  Sample
$s_t\in\cS$ according to
\[
  \Prb_a(s_t=s\mid s_{<t})
  =\frac{\exp\{\eta(a_s-s(\mu_{t-1}))\}}
  {\sum_{u\in\cS}\exp\{\eta(a_u-u(\mu_{t-1}))\}},
  \tag{3.4}\label{eq:selection-kernel}
\]
then apply \eqref{eq:hedge-update}.  The transcript
space is $\Om=\cS^J$.  Write $P_a$ for its law and $p_a(\omega)>0$ for its
mass function.

For one-round intuition, suppose $a=a(x)$ and
$F_q(x)>q(\mu_{t-1})$.  Then the positive copy $(q,+1)$ has positive
discrepancy and the update factor $e^{\gamma q(d)}$ shifts synthetic mass
toward records with larger $q$-value.  If the synthetic answer is too large,
the negative copy has the corresponding discrepancy and reverses the update.
Thus a signed selection specifies both what is inaccurate and how to correct
it.  Formula \eqref{eq:selection-kernel} is precisely an exponential-mechanism
soft maximum of these discrepancies.

The base answer is the query vector of the average pre-update synthetic
distribution,
\[
  \bar\mu(\omega)=\frac1J\sum_{t=0}^{J-1}\mu_t,
  \qquad
  \Ans(\omega)=\bigl(q(\bar\mu(\omega))\bigr)_{q\in\cQ}.
  \tag{3.5}\label{eq:base-answer}
\]
Averaging is the online-to-batch step: Hedge controls the average discrepancy
over the selected rounds, and linearity turns that control into a bound for
every query at $\bar\mu$.  Only the signed-query transcript is random; no
numerical measurements are released during these PMW rounds.

\subsection{Contextual real-order Rényi control}

We next quantify how the transcript distribution changes when its target
changes.  The quantity $\kappa$ below is the quadratic privacy-loss scale of all
$J$ selections.  When $a=a(x)$ and $b=a(y)$, the right-hand side grows with
the square of the Hamming distance.  We prove the stronger statement for all
admissible targets because the likelihood-domination argument will create
targets that are not datasets.

\begin{proposition}[Transcript Rényi bound]
\label{prop:transcript-renyi}
Define
\[
  \kappa=\frac{2J\eta^2}{n^2}.
  \tag{3.6}\label{eq:kappa-def}
\]
For all admissible targets $a,b$ and every $r>1$,
\[
  \Ren_r(P_a\Vert P_b)
  \le 4r\kappa d_*(a,b)^2.
  \tag{3.7}\label{eq:extended-renyi}
\]
\end{proposition}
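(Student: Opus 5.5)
The plan is to reduce the transcript bound to a per-round bounded log-likelihood-ratio estimate and then invoke the adaptive composition recorded in \cref{lem:bounded-lr}.

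Fix admissible targets $a,b$ and write $\Delta=\max_{s\in\cS}\abs{a_s-b_s}=2d_*(a,b)/n$. At round $t$ the prefix $s_{<t}$ fixes $\mu_{t-1}$, and the selection kernel \eqref{eq:selection-kernel} is a softmax with scores $\eta(a_s-s(\mu_{t-1}))$ under $a$ and $\eta(b_s-s(\mu_{t-1}))$ under $b$. The term $s(\mu_{t-1})$ is common to both, so the log-ratio of the conditional probabilities of any $s$ equals
\[
  \eta(a_s-b_s)-\log\frac{\sum_u e^{\eta(a_u-u(\mu_{t-1}))}}{\sum_u e^{\eta(b_u-u(\mu_{t-1}))}}.
\]
The first term lies in $[-\eta\Delta,\eta\Delta]$. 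The log-partition ratio also lies in $[-\eta\Delta,\eta\Delta]$, because each summand's ratio does. The per-round conditional log-likelihood ratio is therefore bounded in absolute value by $a_0\defeq2\eta\Delta=4\eta d_*(a,b)/n$.

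Next I would apply \cref{lem:bounded-lr}. It gives that each round's conditional Rényi divergence is at most $r a_0^2/2$, and that these bounds add over the $J$ adaptive rounds. Since $\Om=\cS^J$ and $p_a(\omega)$ factors as the product of the kernels along the transcript, the adaptive composition applies directly and yields
\[
  \Ren_r(P_a\Vert P_b)\le J\frac{r a_0^2}{2}
  =\frac{8rJ\eta^2 d_*(a,b)^2}{n^2}.
\]
With $\kappa=2J\eta^2/n^2$ this is exactly $4r\kappa d_*(a,b)^2$, matching \eqref{eq:extended-renyi}.

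The main care point is the adaptive composition step: the per-round bound must hold uniformly over every prefix $s_{<t}$, which it does because the bound above never used $\mu_{t-1}$. Admissibility (antisymmetry) is not needed here, since the argument works for arbitrary real target vectors. If \cref{lem:bounded-lr} were stated only for product measures, I would instead prove the chain rule directly by induction on $J$: write $\sum_\omega p_b(p_a/p_b)^r$ as an iterated sum, bound the innermost conditional moment by $e^{(r-1)r a_0^2/2}$ uniformly in the prefix, and peel off one round at a time.
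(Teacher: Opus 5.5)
Your proposal is correct and takes essentially the same route as the paper. Both arguments bound each round's conditional log-likelihood ratio by $2\eta\Delta=4\eta d_*(a,b)/n$, uniformly over prefixes, using the fact that the selected logit and the log normalizer each shift by at most $\eta\Delta$; both then apply the adaptive part of \cref{lem:bounded-lr} over $J$ rounds and substitute $\kappa=2J\eta^2/n^2$.
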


\begin{proof}
Put $\Delta=\max_s\abs{a_s-b_s}=2d_*(a,b)/n$.  At a common preceding
transcript, every logit in \eqref{eq:selection-kernel} changes by at most
$\eta\Delta$.  The selected logit and the log normalizer therefore each
change by at most $\eta\Delta$.  The two conditional laws have pointwise log
likelihood ratio at most $2\eta\Delta=4\eta d_*(a,b)/n$ in absolute value.
Apply the adaptive part of \cref{lem:bounded-lr} for $J$ rounds and substitute
\eqref{eq:kappa-def}.
\end{proof}

\Cref{prop:transcript-renyi} records the usual arbitrary-real-order moment
bound for context.  It is not invoked in the proof of \cref{thm:main}, which
uses the exact integer-order statement in
\cref{lem:integer-likelihood-moment}.  Neither form provides pure privacy at
the target parameters: converting the moment control to a pointwise
likelihood-ratio bound would recover the linear composition cost.

\subsection{Second-moment utility and robustness}

The selected query is unlikely to be far below the largest current
discrepancy.  Hedge turns this per-round statement into accuracy of the
average synthetic distribution.  We prove a second-moment, rather than only
an expected-error, bound because the near-ball argument will later pair the
error with a likelihood ratio using Hölder's inequality with exponent at
most two.

Fix a dataset $x$ and an admissible target $a$.  Define
\[
  \Delta_x(a)=\max_{s\in\cS}\abs{a_s-a(x)_s}
  =\frac{2d_*(a,a(x))}{n},
\]
and, at round $t$,
\[
  D_t(s)=a_s-s(\mu_{t-1}),\quad
  D_t^\star=\max_{s\in\cS}D_t(s),\quad
  G_t=D_t^\star-D_t(s_t).
  \tag{3.8}\label{eq:gap-def}
\]
Here $D_t^\star$ is the largest current signed discrepancy and $G_t$ is the
selection gap: how far the softmax draw falls below that maximum.  Because
$D_t(-s)=-D_t(s)$, we have $D_t^\star\ge0$.

\begin{proposition}[Robust PMW utility]
\label{prop:robust-utility}
Let
\[
  \ell_x(\omega)=\norm{\Ans(\omega)-F(x)}_\infty,
  \qquad L_0=\log T,\qquad L_S=\log(4k).
\]
Then
\[
 \norm{\ell_x}_{L_2(P_a)}
 \le
 \frac{L_0}{\gamma J}+\frac\gamma2+
 \frac{4L_S}{\eta}+\frac{4d_*(a,a(x))}{n}.
 \tag{3.9}\label{eq:robust-l2}
\]
\end{proposition}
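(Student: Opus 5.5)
We bound $\ell_x(\omega)$ pointwise by a sum of an average selection gap and deterministic terms, and then apply Minkowski's inequality in $L_2(P_a)$. Fix a transcript $\omega=(s_1,\dots,s_J)$ with synthetic distributions $\mu_0,\dots,\mu_{J-1}$. Since $a(x)$ is antisymmetric, for every query $q$ the error of $\Ans(\omega)$ on $q$ is
\[
\abs{q(\bar\mu)-F_q(x)} = \max_\sigma \bigl(a(x)_{(q,\sigma)}-(q,\sigma)(\bar\mu)\bigr).
\]
Hence
\[
\ell_x(\omega)=\max_{s\in\cS}\bigl(a(x)_s-s(\bar\mu)\bigr)\le \max_s\bigl(a_s-s(\bar\mu)\bigr)+\Delta_x(a).
\]
The term $\Delta_x(a)=2d_*(a,a(x))/n$ is at most the $4d_*/n$ term. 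The slack in the constant lets us be generous: if the analysis compares against the target twice, as below, we pay $2\Delta_x(a)$ in total.

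Next, by linearity we have $\max_s(a_s-s(\bar\mu))=\max_s\frac1J\sum_t D_t(s)\le\frac1J\sum_t D_t^\star$. Write $D_t^\star=D_t(s_t)+G_t$. For the Hedge part, apply \cref{lem:hedge} to the gain vectors $d\mapsto s_t(d)\in[-1,1]$ with learning rate $\gamma$ and comparator the empirical distribution of $x$. The comparator's cumulative gain is $\sum_t s_t(x)$, where $s(x)$ denotes the row average, equal to $a(x)_s$. Hedge's gain is $\sum_t s_t(\mu_{t-1})$. This gives
\[
\frac1J\sum_t\bigl(a(x)_{s_t}-s_t(\mu_{t-1})\bigr)\le \frac{L_0}{\gamma J}+\frac\gamma2.
\]
Replacing $a(x)_{s_t}$ by $a_{s_t}$ costs at most $\Delta_x(a)$, so
\[
\frac1J\sum_t D_t(s_t)\le \frac{L_0}{\gamma J}+\frac\gamma2+\Delta_x(a).
\]
Note that \cref{lem:hedge} is deterministic and holds for any comparator distribution, and the mixture comparator is legitimate because the regret bound is against the best distribution. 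Combining, we get pointwise
\[
\ell_x\le \frac{L_0}{\gamma J}+\frac\gamma2+\frac{4d_*(a,a(x))}{n}+\frac1J\sum_t G_t.
\]

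It remains to bound $\norm{\frac1J\sum_tG_t}_{L_2(P_a)}\le\frac1J\sum_t\norm{G_t}_{L_2(P_a)}$ using Minkowski. Conditionally on $s_{<t}$, the draw $s_t$ is exactly a softmax over $K=2k$ scores $\eta D_t(s)$ under $P_a$, since the kernel \eqref{eq:selection-kernel} uses the target $a$. So \cref{lem:softmax-gap} gives $\E[G_t^2\mid s_{<t}]\le(4\log(2K)/\eta)^2=(4L_S/\eta)^2$, because $2K=4k$. Taking expectations yields $\norm{G_t}_{L_2}\le 4L_S/\eta$. Summing the pieces gives \eqref{eq:robust-l2}; the deterministic terms pass through the $L_2$ norm unchanged, and $\ell_x\ge0$ justifies taking norms of the pointwise bound.

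The main obstacle is the bookkeeping of the target shift: the selection scores use the surrogate $a$ while the error and the Hedge comparator use $a(x)$. If the lemma's $L_2$ gap is stated for $\log K$ rather than $\log(2K)$, or the exponential mechanism is scaled differently, the constant must be matched carefully. One also needs $D_t^\star\ge0$ (from antisymmetry) for the gap lemma to apply as stated. I would first verify that the total shift cost is $2\Delta_x(a)=4d_*/n$, which appears to be exactly the constant in \eqref{eq:robust-l2}.
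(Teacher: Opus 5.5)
Your proof is correct and follows essentially the same route as the paper. The paper also uses Hedge with the empirical distribution of $x$ as comparator, pays one $\Delta_x(a)$ inside the regret bound and one in returning to $a(x)$, writes $D_t^\star=D_t(s_t)+G_t$, and finishes with Minkowski plus the conditional softmax-gap bound. Your identity $\ell_x=\max_s(a(x)_s-s(\bar\mu))$ via antisymmetry is just a compact form of the paper's ``apply to $s$ and $-s$'' step. One small correction: \cref{lem:softmax-gap} holds for arbitrary real scores, so $D_t^\star\ge0$ is not needed there.
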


The first three terms in \eqref{eq:robust-l2} are respectively the PMW costs
for the initial Hedge potential, update discretization, and private query
selection.
The final term measures robustness to analyzing a surrogate target $a$ instead
of the true dataset target $a(x)$.

\begin{proof}
Conditionally on the past, \eqref{eq:selection-kernel} is a softmax on the
scores $D_t(s)$.  Since $\log(2K)=\log(4k)=L_S$,
\cref{lem:softmax-gap} and the tower property give
$\norm{G_t}_{L_2(P_a)}\le4L_S/\eta$.  Minkowski's inequality therefore gives
\[
  \left\lVert\frac1J\sum_{t=1}^JG_t\right\rVert_{L_2(P_a)}
  \le\frac{4L_S}{\eta}.
  \tag{3.10}\label{eq:average-gap}
\]

Use the empirical distribution of $x$ as the comparator in
\cref{lem:hedge}.  Pointwise in the transcript,
\[
\begin{aligned}
 \frac1J\sum_tD_t(s_t)
 &\le \frac{L_0}{\gamma J}+\frac\gamma2+\Delta_x(a),\\
 \frac1J\sum_tD_t^\star
 &\le \frac{L_0}{\gamma J}+\frac\gamma2+\Delta_x(a)
       +\frac1J\sum_tG_t.
\end{aligned}
\tag{3.11}\label{eq:average-max-score}
\]
For every $s\in\cS$,
\[
  a_s-s(\bar\mu)=\frac1J\sum_tD_t(s)
  \le\frac1J\sum_tD_t^\star.
\]
Apply this once to $s$ and once to $-s$ to bound the absolute difference
between $s(\bar\mu)$ and $a_s$.  A final triangle inequality from $a$ to
$a(x)$ contributes another $\Delta_x(a)$.  Take the $L_2(P_a)$ norm, use
\eqref{eq:average-gap}, and substitute
$2\Delta_x(a)=4d_*(a,a(x))/n$.
\end{proof}

\subsection{A reusable base bound}

For later use, define a common upper bound for the three base-PMW terms.  For
arbitrary
$J\ge1$, $\eta>0$, and
$0<\gamma\le1$, define
\[
  \alpha_0
  \defeq
  \frac{\log T}{\gamma J}+\frac\gamma2
  +\frac{4\log(4k)}{\eta}.
  \tag{3.12}\label{eq:alpha-zero}
\]
The constants leave enough slack for both the $L_2$ estimate here and the
integer likelihood-ratio moments used later.  In particular,
\cref{prop:robust-utility} and $L_1\le L_2$ give
\[
  \norm{\ell_x}_{L_2(P_{a(x)})}\le\alpha_0,
  \qquad
  \E_{P_a}\ell_x
  \le\alpha_0+\frac{4d_*(a,a(x))}{n}.
  \tag{3.13}\label{eq:base-utility}
\]
These inequalities hold for every admissible target $a$; the target need
not be realized by a dataset.  The explicit choices of $J,\eta,$ and
$\gamma$ are postponed to \cref{sec:parameters}.

\section{The transcript envelope}
\label{sec:envelope}

We now define the final private transcript mechanism.  For each possible
transcript, take its likelihood under every dataset and discount that
likelihood by the dataset's Hamming distance from $x$.  The largest discounted
likelihood is the envelope at $x$.

For dataset targets, abbreviate $P_x=P_{a(x)}$ and $p_x=p_{a(x)}$.  Let
$\lambda=\eps/2$ and define
\[
  \widetilde p_x(\omega)
  =\max_{y\in\cD^n}e^{-\lambda d_H(x,y)}p_y(\omega),
  \qquad
  Z_x=\sum_{\omega\in\Om}\widetilde p_x(\omega).
  \tag{4.1}\label{eq:envelope}
\]
The final transcript mass function, transcript law, and released answer are
\[
  \widehat p_x(\omega)=\frac{\widetilde p_x(\omega)}{Z_x},
  \qquad \omega\sim\widehat P_x,
  \qquad M(x)=\Ans(\omega),
  \tag{4.2}\label{eq:final-mechanism}
\]
where $\widehat P_x$ denotes the distribution with mass function
$\widehat p_x$.
Both maxima and sums are over finite nonempty sets.  Moreover $Z_x\ge1$
because the maximum includes $y=x$ and $p_x$ sums to one.
Every base transcript has positive mass under every target because each
selection is a softmax draw.  The construction therefore keeps common support
throughout; it never conditions on a data-dependent ``good'' event.

\begin{proposition}[Pure privacy of the envelope]
\label{prop:envelope-privacy}
The transcript law $\widehat P$ in \eqref{eq:final-mechanism} is pure
$\eps$-differentially private.  Hence so is $M$.
\end{proposition}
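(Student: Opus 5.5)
The argument is a pointwise likelihood-ratio bound, followed by a sum over events and then post-processing. Fix neighboring databases $x,x'$ with $d_H(x,x')=1$, and fix a transcript $\omega\in\Om$. We compare first the unnormalized envelopes $\widetilde p_x(\omega)$ and $\widetilde p_{x'}(\omega)$, and then their normalizers $Z_x$ and $Z_{x'}$.

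\emph{Step 1 (unnormalized envelope).} For every $y\in\cD^n$, the triangle inequality for $d_H$ gives $d_H(x',y)\le d_H(x,y)+1$. Hence
\[
  e^{-\lambda d_H(x',y)}p_y(\omega)\ge e^{-\lambda}\,e^{-\lambda d_H(x,y)}p_y(\omega).
\]
Take the maximum over $y$ of both sides; both maxima are over the same finite nonempty set $\cD^n$. This yields $\widetilde p_{x'}(\omega)\ge e^{-\lambda}\widetilde p_x(\omega)$, that is, $\widetilde p_x(\omega)\le e^{\lambda}\widetilde p_{x'}(\omega)$. The argument is symmetric in $x$ and $x'$, so the reverse inequality also holds.

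\emph{Step 2 (normalizer).} Sum the reverse inequality $\widetilde p_{x'}(\omega)\le e^{\lambda}\widetilde p_x(\omega)$ over the finite set $\Om$. This gives $Z_{x'}\le e^{\lambda}Z_x$, or equivalently $1/Z_x\le e^{\lambda}/Z_{x'}$. Both normalizers are at least $1$, so they are positive and the division is legitimate.

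\emph{Step 3 (combine).} Multiplying the two bounds gives
\[
  \widehat p_x(\omega)=\frac{\widetilde p_x(\omega)}{Z_x}
  \le e^{2\lambda}\frac{\widetilde p_{x'}(\omega)}{Z_{x'}}
  =e^{\eps}\,\widehat p_{x'}(\omega).
\]
Summing over $\omega\in E$ for an arbitrary $E\subseteq\Om$ gives \eqref{eq:event-dp} for the transcript law $\widehat P$.

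\emph{Step 4 (post-processing).} For $M(x)=\Ans(\omega)$ and any set $E'\subseteq\R^\cQ$, take $E=\Ans^{-1}(E')$. Since $\Ans$ is a deterministic map not depending on $x$, the event $\{M(x)\in E'\}$ equals $\{\omega\in E\}$, so the bound from Step~3 transfers directly.

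No step is a real obstacle. The only point needing care is to note that the normalization costs a second factor $e^{\lambda}$, which is exactly why $\lambda=\eps/2$. It is also worth recording explicitly that all supports and maxima are finite and all masses positive, so every ratio is well defined.
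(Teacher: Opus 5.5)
Your proposal is correct and follows the paper's proof: the triangle inequality gives a pointwise $e^{\pm\lambda}$ comparison of the unnormalized envelopes, summing gives the same comparison for $Z_x$ and $Z_{x'}$, and combining the two yields $e^{2\lambda}=e^{\eps}$. Summing over events and applying deterministic postprocessing then completes the argument exactly as in the paper.
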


\begin{proof}
If $d_H(x,x')=1$, the triangle inequality gives, pointwise in $\omega$,
\[
 e^{-\lambda}\widetilde p_{x'}(\omega)
 \le\widetilde p_x(\omega)
 \le e^\lambda\widetilde p_{x'}(\omega).
\]
Summing gives the same two-sided comparison between $Z_x$ and $Z_{x'}$.
Consequently
$e^{-2\lambda}\le\widehat p_x(\omega)/\widehat p_{x'}(\omega)
\le e^{2\lambda}$.  Since $2\lambda=\eps$, summing this pointwise bound over
any event proves pure DP.  The answer is deterministic postprocessing of the
transcript.
\end{proof}

Privacy did not use an upper bound on $Z_x$.  The rest of the paper proves
the stronger pair of estimates
\[
  Z_x=O(1),\qquad
  \sum_\omega\widetilde p_x(\omega)\ell_x(\omega)
  =O\!\left(\alpha_0+\frac1{\eps n}\right).
  \tag{4.3}\label{eq:envelope-goals}
\]

\section{Likelihood-level domination for a Hamming ball}
\label{sec:maurey}

The envelope in \eqref{eq:envelope} ranges over all $T^n$ datasets.  For its
utility analysis, we pointwise upper-bound the likelihood maximum over every
database in a Hamming ball by a small family of answer-space surrogates.  For
each competing
database $y$, a $y$-dependent random surrogate supported on one family
depending only on $x$, $R$, and a code length $m$ has mean $a(y)$.  Thus the
same support family works for every $y$ in the ball.  Its size is at most
$(T^2+1)^m$, independent of $n$.

The sampling device is a finite $\ell_\infty$ instance of Maurey's empirical
method \cite{pisier1981maurey}.  Here it is applied directly to likelihoods:
a fixed-transcript log-sum-exp and Hoeffding estimate implies that some
surrogate, possibly depending on the transcript, has likelihood nearly as
large as the database it represents.

Fix a dataset $x$, an integer radius $R\ge1$, and an integer code length
$m\ge1$.  For $c=(d,d')\in\cD\times\cD$, define the admissible row-move vector
\[
  \delta_c(h)=h(d')-h(d)\qquad(h\in\cS),
\]
and let $\delta_0=0$.  Define the finite target family
\[
 \cV_{x,R,m}
 =\left\{a(x)+\frac{R}{mn}\sum_{i=1}^m\delta_{c_i}:
 c_i\in\{0\}\mathbin{\dot\cup}(\cD\times\cD)\right\}.
 \tag{5.1}\label{eq:maurey-family}
\]
Every $v\in\cV_{x,R,m}$ satisfies $d_*(v,a(x))\le R$, and
\[
  \abs{\cV_{x,R,m}}\le (T^2+1)^m.
  \tag{5.2}\label{eq:maurey-card}
\]
The disjoint union notation treats the zero move as a separate label;
different labels may define the same target, which only improves the bound.

\begin{lemma}[Likelihood Maurey bound]
\label{lem:likelihood-maurey}
For every $y\in\cD^n$ with $d_H(x,y)\le R$ and every transcript
$\omega\in\Om$,
\[
  p_y(\omega)
  \le \exp\!\left(\frac{\kappa R^2}{m}\right)
  \max_{v\in\cV_{x,R,m}}p_v(\omega).
  \tag{5.3}\label{eq:likelihood-maurey}
\]
\end{lemma}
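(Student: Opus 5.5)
The plan is a probabilistic Maurey argument carried out at the level of log-likelihoods for one fixed transcript. Fix $y$ with $d_H(x,y)\le R$ and $\omega=(s_1,\dots,s_J)\in\Om$. Let $I=\{i:x_i\ne y_i\}$, so $\abs{I}\le R$. Then $a(y)=a(x)+\frac1n\sum_{i\in I}\delta_{(x_i,y_i)}$. I will define a random label $C$ that equals $(x_i,y_i)$ with probability $1/R$ for each $i\in I$, and equals the zero label with the remaining probability $1-\abs{I}/R$. With this choice $\E\,\delta_C=\frac1R\sum_{i\in I}\delta_{(x_i,y_i)}$, so $a(y)=a(x)+\frac Rn\E\,\delta_C$. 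Next I draw $C_1,\dots,C_m$ i.i.d.\ and put $V=a(x)+\frac{R}{mn}\sum_{i}\delta_{C_i}$. Then $V\in\cV_{x,R,m}$ always and $\E V=a(y)$. The sampling law depends on $y$, but the support family does not. It therefore suffices to prove
\[
\E\log p_V(\omega)\ge\log p_{a(y)}(\omega)-\frac{\kappa R^2}{m},
\]
because some realization of $V$ attains at least the mean.

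The key structural fact is that the synthetic trajectory $\mu_0,\dots,\mu_{J-1}$ is determined by $\omega$ alone, independently of the target. From \eqref{eq:selection-kernel}, for any admissible $v$,
\[
\log p_v(\omega)=\sum_{t=1}^J\Bigl[\eta v_{s_t}-\eta s_t(\mu_{t-1})-\Lambda_t(v)\Bigr],
\qquad
\Lambda_t(v)=\log\sum_{u\in\cS}e^{\eta(v_u-u(\mu_{t-1}))}.
\]
The linear part has expectation exactly $\eta\sum_t a(y)_{s_t}$ under $V$. So the whole task is to show $\E\Lambda_t(V)\le\Lambda_t(a(y))+2\eta^2R^2/(mn^2)$ for each round. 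Summing this over $J$ rounds gives exactly $\kappa R^2/m$ by \eqref{eq:kappa-def}.

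For the per-round bound, write $\pi_u$ for the softmax weights at target $a(y)$. Then
\[
\Lambda_t(V)=\Lambda_t(a(y))+\log\sum_u\pi_u e^{\eta(V_u-a(y)_u)}.
\]
By Jensen (concavity of $\log$) and linearity, the expectation of the last term is at most $\log\sum_u\pi_u\E e^{\eta(V_u-\E V_u)}$, and this is at most $\log\max_u\E e^{\eta(V_u-\E V_u)}$. For each coordinate $u$, $V_u-\E V_u$ is a sum of $m$ independent centered terms $\frac{R}{mn}(\delta_{C_i}(u)-\E\delta_C(u))$. Since $s$ takes values in $[-1,1]$, each $\delta_c(u)\in[-2,2]$, so each term lies in an interval of length $4R/(mn)$. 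Hoeffding's lemma then gives $\E e^{\eta(V_u-\E V_u)}\le\exp\{m\eta^2(4R/(mn))^2/8\}=\exp\{2\eta^2R^2/(mn^2)\}$, which is the required bound.

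The main obstacle is conceptual rather than computational. Since $\log p_v(\omega)$ is concave in $v$, plain Jensen points the wrong way. The fix is to use the fact that the linear term is exactly unbiased, and to control the log-partition term through the moment generating function of $V$ recentred at $a(y)$. That recentring is what makes Hoeffding applicable coordinatewise, uniformly over the softmax weights. The remaining points to check are routine: that $\abs{I}\le R$ makes $C$ a valid probability law, and that the transcript-dependent choice of maximizer is harmless, since the lemma is pointwise in $\omega$.
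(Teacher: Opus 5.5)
Your proposal is correct and matches the paper's proof essentially step for step: the same uniform sampling from the length-$R$ list of row moves padded with zeros, the same fixed-transcript decomposition into a linear term plus log-partitions, and the same per-round Jensen-plus-Hoeffding bound of $2\eta^2R^2/(mn^2)$. The only cosmetic difference is at the end, where you pass directly from the mean of $\log p_V(\omega)$ to the maximum over $\cV_{x,R,m}$, whereas the paper first exponentiates, applies the geometric-mean/arithmetic-mean inequality, and then bounds the mean by the maximum.
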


\begin{proof}
Let $h=d_H(x,y)$.  Make a list of length $R$ containing the $h$ row-move
vectors
\[
  \bigl(s(y_i)-s(x_i)\bigr)_{s\in\cS}
  \qquad(x_i\ne y_i)
\]
and $R-h$ zero vectors.  Draw $W_1,\ldots,W_m$ independently and uniformly
from this list, and put
\[
  V=a(x)+\frac{R}{mn}\sum_{i=1}^mW_i.
  \tag{5.4}\label{eq:random-target}
\]
Then $\E V=a(y)$ coordinatewise and $V$ is supported on
$\cV_{x,R,m}$.

Fix $\omega=(s_1,\ldots,s_J)$.  Once the full transcript is fixed, every
synthetic iterate $\mu_{t-1}$ is fixed.  The negative log likelihood
$g_\omega(a)=-\log p_a(\omega)$ has the form
\[
 g_\omega(a)=C_\omega+
 \sum_{t=1}^J\left[
 \log\sum_{u\in\cS}e^{\eta(a_u-u(\mu_{t-1}))}
 -\eta a_{s_t}\right],
 \tag{5.5}\label{eq:fixed-loglik}
\]
where $C_\omega=\eta\sum_ts_t(\mu_{t-1})$ is independent of $a$.

Consider one log-partition and let $w_u$ be its softmax weights at $a(y)$.
Jensen's inequality and then Hoeffding's lemma give
\[
\begin{aligned}
 &\E\log\sum_{u\in\cS}w_u
       e^{\eta(V_u-a(y)_u)}\\
 &\quad\le
 \log\sum_{u\in\cS}w_u\E e^{\eta(V_u-a(y)_u)}
 \le\frac{2\eta^2R^2}{mn^2}.
\end{aligned}
\tag{5.6}\label{eq:partition-maurey}
\]
Indeed, for fixed $u$,
\[
 V_u-a(y)_u
 =\frac{R}{mn}\sum_{i=1}^m(W_i(u)-\E W_i(u)),
\]
and each $W_i(u)$ lies in $[-2,2]$, an interval of length four.  Because
$\E V=a(y)$, the expectation of each selected term $-\eta V_{s_t}$ is exactly
$-\eta a(y)_{s_t}$.  Sum \eqref{eq:partition-maurey} over $J$ rounds and use
the identity $\kappa=2J\eta^2/n^2$ to obtain
\[
  \E g_\omega(V)
  \le g_\omega(a(y))+\frac{\kappa R^2}{m}.
\]
Equivalently,
\[
 \log p_y(\omega)
 \le\E\log p_V(\omega)+\frac{\kappa R^2}{m}.
\]
Exponentiation and the geometric-mean/arithmetic-mean inequality give
\[
 p_y(\omega)
 \le e^{\kappa R^2/m}\E p_V(\omega)
 \le e^{\kappa R^2/m}
      \max_{v\in\cV_{x,R,m}}p_v(\omega).
\]
This is \eqref{eq:likelihood-maurey}.
\end{proof}

\begin{remark}
Because \eqref{eq:likelihood-maurey} is pointwise in $\omega$, it may be
multiplied by any nonnegative weight and then summed.  Taking that weight to
be $\ell_x$ is exactly the error-weighted use made in \cref{sec:maximal}; the
lemma controls the utility quantity needed for the envelope, not only its
added mass.
The Maurey bound is independent of $k$ because it uses no union bound over
queries.  Fixing the complete selected-query transcript before viewing
\eqref{eq:fixed-loglik} as a function of the target absorbs adaptivity without
an additional term.
\end{remark}

\section{Maximal likelihood on a Hamming ball}
\label{sec:maximal}

Maurey domination leaves a maximum over a finite family of target vectors.
We must bound both its total mass and its error-weighted mass.  For nearby
targets, likelihood-ratio moments control the maximum.  For more distant
targets, we instead upper-bound the maximum by a sum of probability laws.

Fix a dataset $x$ and an integer radius $R\in\mathbb N$ with $R\ge1$.  Write
\[
  H_R(\omega)
  =\max_{y:d_H(x,y)\le R}p_y(\omega),
  \qquad
  L=\log(T^2+1),
\]
and introduce the transcript moment coefficient $\kappa$ and the resulting
radius-$R$ moment scale $A_R$:
\[
  \kappa=\frac{2J\eta^2}{n^2},
  \qquad
  A_R=\kappa R^2.
  \tag{6.1}\label{eq:ball-scales}
\]
Thus the Maurey factor in \eqref{eq:likelihood-maurey} is
$e^{A_R/m}$.  The quantity $H_R$ is the maximum over the closed ball and will
upper-bound each annulus whose outer radius is $R$.

\begin{lemma}[Integer likelihood-ratio moment]
\label{lem:integer-likelihood-moment}
If $v$ is an admissible target with $d_*(v,a(x))\le R$, then every integer
$r\ge1$ satisfies
\[
 \sum_{\omega}p_x(\omega)
 \left(\frac{p_v(\omega)}{p_x(\omega)}\right)^r
 \le e^{A_Rr^2}.
 \tag{6.2}\label{eq:natural-moment}
\]
\end{lemma}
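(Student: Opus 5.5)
The plan is to prove \eqref{eq:natural-moment} by peeling off one PMW round at a time and bounding each round's conditional moment with a sharp cumulant-generating-function estimate. The generic route through \cref{lem:bounded-lr} is too lossy here. A pointwise log-ratio bound of $2\eta\Delta$ per round would give exponent about $4r(r-1)A_R$, which exceeds the target $A_R r^2$ by a factor of four.

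\textbf{Step 1 (one round).} Put $\Delta=\max_s\abs{v_s-a(x)_s}=2d_*(v,a(x))/n\le 2R/n$. Fix a common prefix $s_{<t}$, so that $\mu_{t-1}$ is fixed. Let $q_x$ and $q_v$ be the two conditional selection laws from \eqref{eq:selection-kernel}, and set $\delta_s=v_s-a(x)_s\in[-\Delta,\Delta]$. The terms $-s(\mu_{t-1})$ cancel in the ratio, giving
\[
 \frac{q_v(s)}{q_x(s)}=\frac{e^{\eta\delta_s}}{\E_{q_x}e^{\eta\delta}}.
\]
With $\psi(\lambda)=\log\E_{q_x}e^{\lambda\delta}$, the conditional moment is
\[
 \E_{q_x}\!\left[\left(\frac{q_v}{q_x}\right)^r\right]=\exp\{\psi(r\eta)-r\psi(\eta)\}.
\]

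\textbf{Step 2 (the CGF bound).} The function $\psi$ is convex with $\psi(0)=0$. Its second derivative $\psi''(\lambda)$ is the variance of $\delta$ under an exponentially tilted law supported in $[-\Delta,\Delta]$, so $\psi''\le\Delta^2$. Decompose
\[
 \psi(r\eta)-r\psi(\eta)
 =\bigl[\psi(r\eta)-\psi(\eta)-(r-1)\eta\psi'(\eta)\bigr]
 +(r-1)\bigl[\eta\psi'(\eta)-\psi(\eta)\bigr].
\]
By Taylor's theorem the first bracket is at most $\Delta^2\eta^2(r-1)^2/2$. The second bracket equals $\int_0^\eta t\psi''(t)\,dt\le\Delta^2\eta^2/2$. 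Adding the two pieces gives
\[
 \psi(r\eta)-r\psi(\eta)\le\frac{\Delta^2\eta^2\,r(r-1)}{2}\le\frac{r^2}{2}\cdot\frac{4\eta^2R^2}{n^2}=\frac{2\eta^2R^2r^2}{n^2}.
\]
This is the step I expect to need the most care, because the constant matters. The decomposition above is what recovers the factor $1/2$ that the crude pointwise log-ratio bound loses. The case $r=1$ is trivial: the sum equals $1$.

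\textbf{Step 3 (adaptive composition).} The transcript ratio factorizes as $p_v(\omega)/p_x(\omega)=\prod_{t=1}^J q_v(s_t\mid s_{<t})/q_x(s_t\mid s_{<t})$. Raise this product to the $r$th power and take $\E_{P_x}$. Condition on $s_{<J}$ and apply Step 2 to the last factor; this bounds it by $e^{2\eta^2R^2r^2/n^2}$ uniformly over prefixes. Then induct downward over the rounds using the tower property. After all $J$ rounds the total bound is $\exp\{2J\eta^2R^2r^2/n^2\}=e^{\kappa R^2r^2}=e^{A_Rr^2}$, which is \eqref{eq:natural-moment}.

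Every mass involved is positive because each selection is a softmax draw, so all ratios are well defined. The argument never uses that $v$ is realized by a dataset, only that $d_*(v,a(x))\le R$.
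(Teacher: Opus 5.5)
Your proof is correct. It differs from the paper's only in how the one-round moment is bounded; the one-round ratio and the composition over rounds are the same in both.

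The paper starts from the same ratio $e^{\eta\delta_s}/\E_{\pi_x^t}e^{\eta\delta}$. It then discards the convexity of the normalizer via Jensen, $\E_{\pi_x^t}e^{\eta\delta}\ge e^{\eta\E_{\pi_x^t}\delta}$, so that $(\pi_v^t/\pi_x^t)^r\le e^{r\eta(\delta-\E\delta)}$. Applying Hoeffding's lemma at parameter $r\eta$ then gives $r^2\eta^2\Delta^2/2$ per round.

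You instead keep the exact log-moment $\psi(r\eta)-r\psi(\eta)$ and bound it through $\psi''\le\Delta^2$. This amounts to inlining the proof of Hoeffding's lemma without the Jensen loss, and it yields $r(r-1)\eta^2\Delta^2/2$ per round.

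The paper's route is two lines with a black-box lemma, and it suffices because the target $e^{A_Rr^2}$ leaves room for the loss. Yours is slightly longer but sharper in three ways. The exponent vanishes at $r=1$. The argument holds for every real $r\ge1$. After composition it gives $\Ren_r(P_v\Vert P_x)\le r\kappa\, d_*(v,a(x))^2$, a factor-four improvement over \cref{prop:transcript-renyi}.

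Your remark that routing through \cref{lem:bounded-lr} would only give $e^{4r(r-1)A_R}$ is also correct. That bound does not imply the stated inequality for large $r$, which is consistent with the paper's choice to bypass that lemma here.
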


\begin{proof}
Condition on a transcript prefix at round $t$, and write $\pi_x^t$ and
$\pi_v^t$ for the resulting softmax kernels.  Since the synthetic iterate is
determined by the prefix, their logits differ only through
$\delta_s=v_s-a(x)_s\in[-2R/n,2R/n]$.  Therefore

\[
  \frac{\pi_v^t(s)}{\pi_x^t(s)}
  =\frac{e^{\eta\delta_s}}{\E_{\pi_x^t}e^{\eta\delta}}.
\]
Jensen's inequality lower-bounds the denominator by
$e^{\eta\E_{\pi_x^t}\delta}$.  Hoeffding's lemma therefore gives

\[
  \E_{\pi_x^t}\left(\frac{\pi_v^t}{\pi_x^t}\right)^r
  \le \E_{\pi_x^t}e^{r\eta(\delta-\E_{\pi_x^t}\delta)}
  \le \exp\left(\frac{2r^2\eta^2R^2}{n^2}\right).
\]
Iterating conditional expectation over $J$ rounds multiplies these moment
costs, giving \eqref{eq:natural-moment} because
$A_R=2J\eta^2R^2/n^2$.
\end{proof}

\subsection{Nearby databases: likelihood-ratio moments}
\label{sec:near-max}

Suppose $A_R\le L$.  Apply \cref{lem:likelihood-maurey} with code length
$m=1$, and write $\cV=\cV_{x,R,1}$.  Set
\[
  r_R=\left\lceil\sqrt{\frac{L}{A_R}}\right\rceil+1.
  \tag{6.3}\label{eq:near-orders}
\]
Then $r_R\ge2$,
\[
 \sqrt{\frac{L}{A_R}}\le r_R
 <\sqrt{\frac{L}{A_R}}+2,
\]
and consequently
\[
  \frac{L}{r_R}+A_Rr_R
  \le4\sqrt{A_RL}.
  \tag{6.4}\label{eq:near-optimization}
\]
Indeed, the first term is at most $\sqrt{A_RL}$, while
$A_Rr_R\le\sqrt{A_RL}+2A_R\le3\sqrt{A_RL}$.

Put $\Lambda_v=p_v/p_x$; the common softmax support makes this ratio well
defined.
The Maurey bound and \eqref{eq:natural-moment} give
\[
 H_R\le e^{A_R}p_x\max_{v\in\cV}\Lambda_v,
 \qquad
 \left\lVert\max_{v\in\cV}\Lambda_v\right\rVert_{L_{r_R}(P_x)}
 \le\left(\sum_{v\in\cV}\E_{P_x}\Lambda_v^{r_R}\right)^{1/r_R}
 \le e^{L/r_R+A_Rr_R}.
\]
For the conjugate exponent $b_R=r_R/(r_R-1)\le2$, Hölder's inequality and
monotonicity of probability-space norms give
$\lVert \phi\rVert_{L_{b_R}(P_x)}\le\lVert \phi\rVert_{L_2(P_x)}$.
Therefore every nonnegative $\phi$ satisfies
\[
 \sum_\omega H_R(\omega)\phi(\omega)
 \le
 \exp\left\{
   A_R+\frac{L}{r_R}+A_Rr_R
 \right\}
 \norm{\phi}_{L_2(P_x)}.
 \tag{6.5}\label{eq:near-max}
\]
The three exponents are respectively the Maurey, cardinality, and
likelihood-moment costs.  Since $A_R\le L$,
$A_R+4\sqrt{A_RL}\le5\sqrt{A_RL}$.  Taking $\phi=1$ and
$\phi=\ell_x$,
respectively, gives
\[
 \sum_\omega H_R(\omega)
 \le e^{5\sqrt{A_RL}},
 \qquad
 \sum_\omega H_R(\omega)\ell_x(\omega)
 \le e^{5\sqrt{A_RL}}\alpha_0.
 \tag{6.6}\label{eq:near-loss}
\]

\subsection{Distant databases: a mixture bound}
\label{sec:far-max}

Suppose instead that $A_R>L$, and choose
\[
 m_R=\max\left\{1,
   \left\lceil\sqrt{\frac{A_R}{L}}\right\rceil
 \right\}.
 \tag{6.7}\label{eq:far-m}
\]
After the Maurey step, use $\max\le\sum$.  Since the code family has at most
$e^{m_RL}$ members,
\[
 \sum_\omega H_R(\omega)
 \le e^{A_R/m_R+m_RL}.
 \tag{6.8}\label{eq:far-mass}
\]
Every code target lies within target distance $R$ of the center, so
\eqref{eq:base-utility} gives
$\E_{P_v}\ell_x\le\alpha_0+4R/n$.  Hence
\[
 \sum_\omega H_R(\omega)\ell_x(\omega)
 \le
 e^{A_R/m_R+m_RL}
 \left(\alpha_0+\frac{4R}{n}\right).
 \tag{6.9}\label{eq:far-loss}
\]
The ceiling bounds imply
\[
 \frac{A_R}{m_R}+m_RL
 \le2\sqrt{A_RL}+L
 \le3\sqrt{A_RL},
 \tag{6.10}\label{eq:far-optimized}
\]
where the last inequality uses $L\le A_R$.

Combining the near and far cases gives the uniform ball estimate used for
every radial block:
\[
 \boxed{
 \begin{aligned}
  \sum_\omega H_R(\omega)
  &\le e^{5\sqrt{A_RL}},\\
  \sum_\omega H_R(\omega)\ell_x(\omega)
  &\le
  e^{5\sqrt{A_RL}}
  \left(\alpha_0+\frac{4R}{n}\right).
 \end{aligned}}
 \tag{6.11}\label{eq:one-ball-bound}
\]

\section{Radial blocking and envelope utility}
\label{sec:blocking}

We now sum \eqref{eq:one-ball-bound} in blocks of radius
$\Theta(1/\eps)$, the scale over which the envelope penalty changes by a
constant factor.  Fix
\[
  c_0=\frac1{900},
  \qquad
  \kappa=\frac{2J\eta^2}{n^2}
  \le\frac{c_0\eps^2}{L}.
  \tag{7.1}\label{eq:kappa-choice}
\]
This condition makes the growth of every ball likelihood small compared with
the envelope's distance penalty.

\paragraph{Log-cost decomposition.}
The following ledger summarizes the preceding argument for the radius-$R$
ball.  Its entries are additive contributions to the logarithm of the ball
bound; $m_R$ and $r_R$ are the optimized code length and moment order from
\cref{sec:maximal}.

\begin{table}[H]
  \centering
  \caption{Log-cost ledger for the radius-$R$ ball bound.}
  \label{tab:radius-ledger}
  \begin{tabular}{@{}
    >{\raggedright\arraybackslash}p{0.30\linewidth}
    >{\centering\arraybackslash}p{0.27\linewidth}
    >{\centering\arraybackslash}p{0.27\linewidth}@{}}
    \toprule
    Contribution & Near: moment & Far: mixture \\
    \midrule
    Maurey smoothing  & $A_R$ & $A_R/m_R$ \\
    Code cardinality  & $L/r_R$ & $m_RL$ \\
    Likelihood moment & $A_Rr_R$ & none \\
    Envelope penalty  & $-\eps R/2$ & $-\eps R/2$ \\
    \bottomrule
  \end{tabular}
\end{table}

When $A_R\le L$, one Maurey move and
$r_R\simeq\sqrt{L/A_R}$ make the positive cost at most
$5\sqrt{A_RL}$; the error factor is $\alpha_0$.  When $A_R>L$,
$m_R\simeq\sqrt{A_R/L}$ makes the positive cost at most
$3\sqrt{A_RL}$; after $\max\le\sum$, each mixture law has total mass one,
so there is no likelihood-moment cost, and the prefactor is
$\alpha_0+4R/n$.  By \eqref{eq:kappa-choice}, both positive costs are at most
$\eps R/6$, below the envelope penalty.

A radius-by-radius summation would incur an additional factor $1/\eps$.
Indeed, even an exact-radius estimate
decaying as $e^{-\eps R/3}$ would give, for $0<\eps\le1$,
\[
  \sum_{R\ge1}e^{-\eps R/3}=\Theta(1/\eps),
  \qquad
  \sum_{R\ge1}\frac{R}{n}e^{-\eps R/3}
  =\Theta\!\left(\frac1{\eps^2n}\right).
\]
This loss comes from discretizing radii, not from a union bound over datasets,
which Maurey domination has already avoided.
Blocking at width $B\simeq1/\eps$ removes exactly this discretization loss:
the rigorous argument uses the outer radius of a block for the positive ball
cost and its inner radius for the negative envelope penalty.  The resulting
block weights have constant total mass, and their $(R/n)$-weighted
contribution is $O(B/n)=O(1/(\eps n))$.

Set the integer block width
\[
  B=\max\{1,\lceil3/\eps\rceil\},
  \qquad \theta=\eps B.
 \tag{7.2}\label{eq:block-width}
\]
Then $\theta\ge3$.  If $0<\eps\le3$, then $\theta\le6$; if
$\eps>3$, then $B=1$ and $\theta=\eps$.

For $j\ge0$, let
\[
  R_j=(j+1)B.
 \tag{7.3}\label{eq:annuli}
\]
The envelope-growth condition gives
\[
  5\sqrt{A_{R_j}L}
  \le5\sqrt{c_0}\,\eps R_j
  =5\sqrt{c_0}\,\theta(j+1)
  \le\frac{\theta(j+1)}6,
  \tag{7.4}\label{eq:annulus-A}
\]
because $\sqrt{c_0}=1/30$.
Thus \eqref{eq:one-ball-bound} becomes
\[
 \begin{aligned}
  \sum_\omega H_{R_j}(\omega)
  &\le
    \exp\left\{\frac{\theta(j+1)}6\right\},\\
  \sum_\omega H_{R_j}(\omega)\ell_x(\omega)
  &\le
    \exp\left\{\frac{\theta(j+1)}6\right\}
    \left(\alpha_0+\frac{4B(j+1)}{n}\right).
 \end{aligned}
 \tag{7.5}\label{eq:shell-ball-bound}
\]

Every $y\ne x$ belongs to a unique block indexed by
\[
  j=\left\lfloor\frac{d_H(x,y)-1}{B}\right\rfloor<n
\]
and satisfies $jB+1\le d_H(x,y)\le(j+1)B$.  Consequently, pointwise in the
transcript,
\[
  \widetilde p_x(\omega)
  \le p_x(\omega)+
  \sum_{j=0}^{n-1}
    e^{-(\eps/2)(jB+1)}H_{R_j}(\omega).
  \tag{7.6}\label{eq:envelope-block-decomposition}
\]

Define the large-$\eps$ damping factor
\[
  u_\eps=
  \begin{cases}
    1,&0<\eps\le3,\\
    e^{-\eps/3},&\eps>3,
  \end{cases}
  \tag{7.7}\label{eq:prefix}
\]
and define the combined block weight
\[
  \Phi_j
  =
  \exp\left\{-\frac{\eps}{2}(jB+1)\right\}
  \exp\left\{\frac{\theta(j+1)}6\right\}.
  \tag{7.8}\label{eq:complete-shell-factor}
\]

\begin{lemma}[Finite blocked-radius summation]
\label{lem:blocked-shell-sum}
For every $j\ge0$,
\[
  \Phi_j
  \le eu_\eps e^{-j\theta/3}.
  \tag{7.9}\label{eq:pointwise-shell-sum}
\]
Consequently, for every integer $N\ge0$,
\[
  \sum_{j=0}^{N-1}\Phi_j
  \le2eu_\eps,
  \qquad
  \sum_{j=0}^{N-1}(j+1)\Phi_j
  \le4eu_\eps.
  \tag{7.10}\label{eq:geometric-sums}
\]
\end{lemma}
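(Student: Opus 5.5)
Since $\theta=\eps B$, the exponent of $\Phi_j$ is
\[
  -\frac{\eps}{2}(jB+1)+\frac{\theta(j+1)}6
  =-\frac{j\theta}{2}-\frac{\eps}{2}+\frac{j\theta}{6}+\frac{\theta}{6}
  =-\frac{j\theta}{3}+\Bigl(\frac{\theta}{6}-\frac{\eps}{2}\Bigr).
\]
The $j$-dependent part is exactly $e^{-j\theta/3}$, so \eqref{eq:pointwise-shell-sum} reduces to the single inequality $\theta/6-\eps/2\le1+\log u_\eps$. We check it by cases on $\eps$, using the facts recorded after \eqref{eq:block-width}. If $0<\eps\le3$, then $\theta\le6$, so the constant term is at most $1-\eps/2\le1=1+\log u_\eps$. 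If $\eps>3$, then $B=1$ and $\theta=\eps$, so the constant term is $\eps/6-\eps/2=-\eps/3\le1-\eps/3=1+\log u_\eps$. This gives $\Phi_j\le eu_\eps e^{-j\theta/3}$ for every $j\ge0$.

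For \eqref{eq:geometric-sums}, set $\rho=e^{-\theta/3}$. Since $\theta\ge3$ in both regimes, $\rho\le e^{-1}$. Extending the finite sums to infinite geometric series (all terms are nonnegative) gives
\[
  \sum_{j=0}^{N-1}\Phi_j\le eu_\eps\sum_{j\ge0}\rho^j=\frac{eu_\eps}{1-\rho},
  \qquad
  \sum_{j=0}^{N-1}(j+1)\Phi_j\le eu_\eps\sum_{j\ge0}(j+1)\rho^j=\frac{eu_\eps}{(1-\rho)^2}.
\]
With $\rho\le e^{-1}$ we have $1/(1-\rho)\le e/(e-1)<2$, and $1/(1-\rho)^2\le\bigl(e/(e-1)\bigr)^2\approx2.50<4$. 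These yield the constants $2eu_\eps$ and $4eu_\eps$. The case $N=0$ is trivial because the sums are empty.

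There is no real obstacle in this argument. The only place needing care is the constant term in the case split. The bound $\theta\le6$ is exactly what keeps $\theta/6\le1$ when $\eps\le3$. When $\eps>3$, the identity $\theta=\eps$ turns the constant into $-\eps/3$, which is the source of the damping factor $u_\eps$.
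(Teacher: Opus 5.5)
Your proof is correct and follows the paper's argument: you use the same rearrangement of the exponent and the same two-case check, with $\theta\le6$ when $\eps\le3$ and $\theta=\eps$ when $\eps>3$. The only cosmetic difference is in the sums: you evaluate the geometric series in closed form with ratio at most $e^{-1}$, whereas the paper bounds $e^{-\theta/3}$ by $1/2$ and uses $\sum_j 2^{-j}\le2$ and $\sum_j(j+1)2^{-j}\le4$.
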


\begin{proof}
See \cref{app:proof-blocked-shell-sum}.
\end{proof}

The damping factor also satisfies
\[
  u_\eps\le1,
  \qquad
  Bu_\eps\le\frac9\eps.
  \tag{7.11}\label{eq:block-eps}
\]
For $\eps\le3$, this uses $B\le6/\eps$.  For $\eps>3$, it uses $B=1$
and the elementary bound $\eps e^{-\eps/3}\le9$.

We can now sum the block bound.  The central base law has mass
one and expected loss at most $\alpha_0$.  Combining
\eqref{eq:envelope-block-decomposition}, \eqref{eq:shell-ball-bound}, and
\cref{lem:blocked-shell-sum} gives
\[
\begin{aligned}
  \sum_\omega\widetilde p_x(\omega)\ell_x(\omega)
  &\le
  \alpha_0+
  \alpha_0\sum_{j=0}^{n-1}\Phi_j+
  \frac{4B}{n}\sum_{j=0}^{n-1}(j+1)\Phi_j\\
  &\le
  \alpha_0+2eu_\eps\alpha_0+
  \frac{16eBu_\eps}{n}\\
  &\le
  (1+2e)\alpha_0+\frac{144e}{\eps n}.
\end{aligned}
\tag{7.12}\label{eq:unnormalized-error}
\]
Here $2e$ and $4e$ come from \eqref{eq:geometric-sums}, while
$144e=16e\cdot9$ also uses \eqref{eq:block-eps}.

Repeating the same calculation with $\ell_x$ replaced by $1$ gives the
normalizer estimate
\[
  1\le Z_x\le1+2eu_\eps\le1+2e.
  \tag{7.13}\label{eq:normalizer}
\]
The lower bound follows from $\widetilde p_x\ge p_x$.  Since $Z_x\ge1$,
normalization can only decrease the unnormalized error bound, and therefore
\[
  \E\norm{M(x)-F(x)}_\infty
  \le(1+2e)\alpha_0+\frac{144e}{\eps n}.
  \tag{7.14}\label{eq:pre-final-rate}
\]

\section{Parameter regimes and proof of the main theorem}
\label{sec:parameters}

Ignoring constants, the parameter balance is as follows.  To target
error $\alpha$, take
$\gamma=\Theta(\alpha)$,
$J=\Theta(\log T/\alpha^2)$, and
$\eta=\Theta(\log k/\alpha)$.  The first two choices balance Hedge regret,
and the third controls the exponential-mechanism selection error.  Then
$\alpha_0=O(\alpha)$, while the envelope-growth condition reduces to
$\alpha^2=\Omega(\log T\log k/(\eps n))$.

We now verify explicit integer choices.  The three numerical parameters below
were selected jointly against the resulting center and composition terms and
then rounded to simple rational values; no best-possible universal constant is
claimed.  Put
\begin{align}
  L_D&=\log(2T),\qquad L_Q=\log(2k),
  \tag{8.1}\label{eq:positive-logs}\\
  34200L_DL_Q&\le \eps n\alpha^2,
  \qquad 0<\alpha\le1.
  \tag{8.2}\label{eq:explicit-sample-condition}
\end{align}
The positive logarithms exceed $1/2$, including when $T=1$ or $k=1$.
Choose
\begin{align}
  J&=\left\lceil\frac{98L_D}{\alpha^2}\right\rceil,
  &\eta&=\frac{57L_Q}{\alpha},
  &\gamma&=\frac\alpha7,
  \tag{8.3}\label{eq:lean-parameters}\\
  \frac{98L_D}{\alpha^2}&\le J
  \le\frac{100L_D}{\alpha^2}.
  \tag{8.4}\label{eq:round-bounds}
\end{align}
For the base PMW error $\alpha_0$ defined in \eqref{eq:alpha-zero}, the three
displayed contributions are at most $\alpha/14,\alpha/14,$ and
$8\alpha/57$, respectively.  Thus
\[
  \alpha_0\le\frac{113}{399}\alpha.
\]

For $L=\log(T^2+1)$ and $\kappa=2J\eta^2/n^2$, the round bound,
$L\le2L_D$, and the square of \eqref{eq:explicit-sample-condition} give
\[
  \kappa\le\frac{649800L_DL_Q^2}{n^2\alpha^4},
  \qquad
  900\kappa L
  \le\frac{34200^2L_D^2L_Q^2}{n^2\alpha^4}
  \le\eps^2.
\]
Hence the envelope-growth condition holds:
\[
  \frac{2J\eta^2}{n^2}
  \le \frac{\eps^2}{900\log(T^2+1)}.
  \tag{8.5}\label{eq:composition-check}
\]
Combining this with \eqref{eq:pre-final-rate} gives
\begin{align}
  \alpha_0&\le\frac{113}{399}\alpha,
  \tag{8.6}\label{eq:linear-absorbed}\\
  \E\norm{M(x)-F(x)}_\infty
  &\le(1+2e)\alpha_0+\frac{144e}{\eps n}.
  \tag{8.7}\label{eq:explicit-alpha-rate}
\end{align}

Finally set
$\tau=\sqrt{L_DL_Q/(\eps n)}$.
If $185\tau\le1$, take $\alpha=185\tau$.  Then
$\eps n\alpha^2=34225L_DL_Q$, so
\eqref{eq:explicit-sample-condition} holds.  Moreover,
$L_DL_Q\ge1/4$ gives
\[
  \frac1{\eps n}
  =\frac{\tau^2}{L_DL_Q}
  \le4\tau^2\le\frac{4\tau}{185}.
\]
Consequently \eqref{eq:explicit-alpha-rate} is at most
\[
  (1+2e)\frac{113}{399}(185\tau)
  +\frac{576e}{185}\tau
  \le129e\tau,
\]
where the last inequality uses $e>2.7$.
If $185\tau>1$, use the separate deterministic zero-output mechanism.  Its
error is at most one and its one-point output law is $0$-DP; also
$1\le129e\min\{1,\tau\}$.  Thus \cref{thm:main} holds in all regimes
with the stated constant $129e$.  The nontrivial branch is private by
\cref{prop:envelope-privacy}, followed by deterministic postprocessing.

\section{An envelope proof of the normalized Euclidean rate}
\label{sec:l2}

We now prove the secondary \cref{thm:main-l2}.  Nikolov's efficient
Johnson--Lindenstrauss mechanism already resolved Open Problem~2
\cite{nikolov2023jl}; the purpose here is to show that the same rate also
falls within the envelope framework and admits an end-to-end formalization.
The envelope, likelihood-level Maurey
argument, and radial blocking are unchanged.  The only modification is the
base transcript: instead of privately selecting the query and its sign, each
round samples the query uniformly, independently of the data, and privately
selects only its sign.  This replaces the selection cost $\log k$ by an
average squared-residual estimate.

Write
\[
  \norm{z}_{2,k}=\frac1{\sqrt{k}}\norm{z}_2.
\]
For an auxiliary target vector $a\in\R^{\cQ}$ and current synthetic
distribution $\mu$, sample $q$ uniformly from $\cQ$ and then sample
$\sigma\in\{-1,1\}$ according to
\[
  \Prb_a(\sigma\mid q,\mu)
  =\frac{\exp\{(\sigma/4)(a_q-q(\mu))\}}
  {\exp\{(a_q-q(\mu))/4\}+\exp\{-(a_q-q(\mu))/4\}}.
  \tag{9.1}\label{eq:l2-sign-kernel}
\]
The update remains \eqref{eq:hedge-update}, with $s=\sigma q$, and the
decoder again outputs the query answers of the average synthetic
distribution.  Denote the resulting transcript law by $P_a^{(2)}$, its mass
function by $p_a^{(2)}$, and its normalized Euclidean decoding loss relative
to $x$ by
\[
  \ell_x^{(2)}(\omega)
  =\norm{\widehat a(\omega)-F(x)}_{2,k}.
\]

The conditional mean of the selected sign is
\[
  \E_a[\sigma\mid q,\mu]
  =\tanh\!\left(\frac{a_q-q(\mu)}4\right).
  \tag{9.2}\label{eq:l2-sign-mean}
\]
The following exact scalar inequality supplies the required progress.  If
$\abs r\le2$, then, for every $\delta\in\R$,
\[
  \frac{r^2}{24}-\frac38\delta^2
  \le r\tanh\!\left(\frac{r+\delta}{4}\right).
  \tag{9.3}\label{eq:robust-tanh}
\]
Indeed, $r^2/12\le r\tanh(r/4)$, the derivative bound
$\abs{\tanh'(u)}\le1$ is global, and
$\abs{r\delta}/4\le r^2/24+3\delta^2/8$ by Young's inequality.
Thus \eqref{eq:robust-tanh} uses no Taylor approximation or asymptotic
relaxation.

Suppose $\max_q\abs{a_q-F_q(x)}\le B$.  Apply
\eqref{eq:robust-tanh} with
$r=F_q(x)-q(\mu)$ and $\delta=a_q-F_q(x)$, average over the uniform query,
and sum the Hedge potential bound over the adaptive transcript.  Convexity of
squared loss then passes from the trajectory to the averaged decoder and
gives
\[
  \E_{P_a^{(2)}}\bigl[(\ell_x^{(2)})^2\bigr]
  \le
  24\left(\frac{\log T}{\gamma J}+\frac\gamma2\right)+9B^2.
  \tag{9.4}\label{eq:l2-robust-utility}
\]
In particular, at the center $a=F(x)$, both the expected loss and its
$L_2(P_{F(x)}^{(2)})$ norm are at most
\[
  \alpha_2(\gamma,J)
  :=\sqrt{24\left(\frac{\log T}{\gamma J}+\frac\gamma2\right)}.
  \tag{9.5}\label{eq:l2-center}
\]

The likelihood calculation is simpler than for signed-query selection.
The uniform query factor cancels exactly between two target vectors.  The
conditional sign law in \eqref{eq:l2-sign-kernel} is a two-point exponential
tilt, so Hoeffding's lemma and adaptive composition give, for every positive
integer $r$ and every $\max_q\abs{a_q-b_q}\le B$,
\[
  \sum_\omega p_b^{(2)}(\omega)
    \left(\frac{p_a^{(2)}(\omega)}{p_b^{(2)}(\omega)}\right)^r
  \le
  \exp\!\left\{\frac{JB^2}{2}\left(\frac r4\right)^2\right\}.
  \tag{9.6}\label{eq:l2-integer-moment}
\]
Thus the likelihood-level Maurey proof of
\cref{lem:likelihood-maurey} applies verbatim with
\[
  \kappa_2=\frac{J}{8n^2}.
  \tag{9.7}\label{eq:l2-kappa}
\]
In particular, sampling $m$ row replacements dominates the likelihood of a
radius-$R$ target at pointwise cost $\exp\{\kappa_2R^2/m\}$.

For completeness, let
\[
  H_R^{(2)}(\omega)
  =\max_{y:d_H(x,y)\le R}p_{F(y)}^{(2)}(\omega),
  \qquad L=\log(T^2+1),
\]
and put $\overline R=\min\{R,n\}$.  Optimizing the same nearby moment bound
and distant mixture bound as in \cref{sec:near-max,sec:far-max} yields
\begin{align*}
  \sum_\omega H_R^{(2)}(\omega)
  &\le \exp\{5\overline R\sqrt{\kappa_2L}\},
  \tag{9.8}\label{eq:l2-ball-mass}\\
  \sum_\omega H_R^{(2)}(\omega)\ell_x^{(2)}(\omega)
  &\le \exp\{5\overline R\sqrt{\kappa_2L}\}
    \left(\alpha_2(\gamma,J)+
      \frac{6\overline R}{n}\right).
  \tag{9.9}\label{eq:l2-ball-loss}
\end{align*}
The linear radial term follows from \eqref{eq:l2-robust-utility} with
$B=2R/n$ and $\sqrt{u+9B^2}\le\sqrt u+3B$.

Use $p_{F(y)}^{(2)}$ in the same envelope \eqref{eq:intro-envelope} and
normalize.  If
\[
  \kappa_2\le\frac{\eps^2}{900L},
  \tag{9.10}\label{eq:l2-composition}
\]
then the blocked-radius proof of \cref{sec:blocking}, with all radii capped at
$n$, combines \eqref{eq:l2-ball-mass} and \eqref{eq:l2-ball-loss}.  The
geometric and weighted-geometric sums are unchanged.  Retaining the center
term separately gives the explicit normalized-envelope bound
\[
  \E\ell_x^{(2)}
  \le(1+2e)\alpha_2(\gamma,J)
    +\frac{216e}{\eps n}.
  \tag{9.11}\label{eq:l2-envelope-rate}
\]
The envelope privacy proof is exactly \cref{prop:envelope-privacy}, and the
decoded vector mechanism is pure $\eps$-DP by postprocessing.

It remains to choose parameters.  Put $L_D=\log(2T)$ and, for
$0<\alpha\le1$, assume
\[
  2162L_D\le\eps n\alpha^2.
  \tag{9.12}\label{eq:l2-sample-condition}
\]
Choose
\[
  J=\left\lceil\frac{20000L_D}{\alpha^4}\right\rceil,
  \qquad \gamma=\frac{\alpha^2}{96}.
  \tag{9.13}\label{eq:l2-parameters}
\]
Then $J\le20002L_D/\alpha^4$,
$\alpha_2(\gamma,J)\le\alpha/2$, and
\eqref{eq:l2-sample-condition} implies \eqref{eq:l2-composition}.
Substitution into \eqref{eq:l2-envelope-rate} gives
\[
  \E\ell_x^{(2)}
  \le\frac{1+2e}{2}\alpha+\frac{216e}{\eps n}.
  \tag{9.14}\label{eq:l2-alpha-rate}
\]

Finally let
$\tau_2=\sqrt{L_D/(\eps n)}$.  If $(93/2)\tau_2\le1$, take
$\alpha=(93/2)\tau_2$.  Since $(93/2)^2=8649/4>2162$, the sample condition
holds.  Moreover, $L_D\ge\log2>0.69$ and $\tau_2\le2/93$ imply
$1/(\eps n)=\tau_2^2/L_D\le\tau_2/32$.  Hence
\eqref{eq:l2-alpha-rate} is at most
\[
  \frac{93}{4}\tau_2+
  \left(\frac{93}{2}+\frac{27}{4}\right)e\tau_2
  \le62e\tau_2,
\]
where the last inequality uses $e>2.7$.  If $(93/2)\tau_2>1$, use the
deterministic zero-output mechanism.  Its normalized
Euclidean error is at most one and it is $0$-DP, while
$1\le62e\min\{1,\tau_2\}$.  This proves \cref{thm:main-l2} in all regimes.

\section{Discussion and limitations}

The main theorem addresses the expected-$\ell_\infty$ formulation of Open
Problem~1; \cref{thm:main-l2} independently recovers the previously resolved
Open Problem~2 rate.  Markov's inequality gives constant-probability
accuracy at the same respective asymptotic rates.  For worst-coordinate
failure probability $\zeta$,
running $O(\log(1/\zeta))$ independent copies with the privacy budget divided
among them and taking a coordinatewise median preserves pure $\eps$-DP and
incurs an additional
$O(\bigl(\log(1/\zeta)\bigr)^{1/2})$ error factor.  The single-run envelope
analysis does not establish optimal confidence dependence.

The mechanism is information-theoretic.  The envelope maximizes over
$\cD^n$, and its transcript space has size $(2k)^J$; the proof does not yield
a polynomial-time implementation.  Finding an efficient pure-DP mechanism
with the same universal rate remains open.

Finally, \cref{thm:main,thm:main-l2} are uniform over all finite parameter
choices, but they are not all-regimes minimax statements.  The outer minima and shifted
logarithms cover degenerate cases correctly; they do not assert matching
lower bounds there.  For instance, a singleton universe permits exact
release, and a single query admits the sensitivity-calibrated Laplace
mechanism.  More generally, direct vector release with independent Laplace
noise has expected worst-coordinate error
$O(k\log(2k)/(\eps n))$ and normalized Euclidean error
$O(k/(\eps n))$, which are preferable in some small-workload regimes.

\section{Formal verification}
\label{sec:formal}

Sections~3--9 together with Appendix~A give standalone analytic proofs.  A
companion Lean 4 development, pinned to Lean and Mathlib 4.28.0, formalizes
\cref{thm:main,thm:main-l2}, including both finite mechanisms and the
integer-order likelihood calculations used in the proofs.  It verifies
normalization, pure privacy before and after each dataset-independent decoder,
the integer parameter choices, and both explicit all-regimes expected-error
bounds.  The declaration
\nolinkurl{optimal_pureDP_decoded_statistical_query_release_optimized_rate}
has constant
\[
  129e\min\left\{1,
  \sqrt{\frac{\log(2T)\log(2k)}{\eps n}}\right\}.
\]
The normalized-Euclidean bound is formalized by the Lean declaration
\begin{center}
  \small
  \nolinkurl{optimal_pureDP_decoded_normalized_l2_statistical_query_release_quadratic_rate}.
\end{center}
Its constant is
\[
  62e\min\left\{1,
  \sqrt{\frac{\log(2T)}{\eps n}}\right\}.
\]
These declarations are \cref{thm:main,thm:main-l2}: they quantify over arbitrary finite
nonempty $\cD$ and $\cQ$,
every $n\ge1$ and $\eps>0$, construct normalized finite transcript laws and
dataset-independent decoders, prove pure DP for every decoded event, and
establish the displayed expected-error bounds for every dataset.  The
real-order \cref{prop:transcript-renyi,lem:renyi-max} are included in the paper
for context but are not part of this proof chain: both the analytic proof and
Lean use integer moments, including
\cref{lem:integer-likelihood-moment} and \eqref{eq:l2-integer-moment}, and the
direct maximum-and-Hölder calculation in \eqref{eq:near-max}.

The artifact contains exact reproduction commands, a paper-to-Lean
crosswalk, and \texttt{\#print axioms} checks.  Both audited decoded theorems
report only \texttt{propext}, \texttt{Classical.choice}, and
\texttt{Quot.sound} from Lean's standard logical foundation.
\ifpublicpaper
\par\smallskip\noindent
The archive
\texttt{query-release-lean-\allowbreak supplement.tar.gz} is distributed
as ancillary material with
\href{https://arxiv.org/abs/2607.20418}{arXiv:2607.20418}; its SHA-256 digest is
\texttt{e44ad89bb0faf60d\allowbreak 31ab89ff6f4a5fe2\allowbreak
ae10cc4c686e2398\allowbreak bc8e53bd98cea6bb}.
\fi

\section{Conclusion}

We prove the remaining conjectured pure-DP query-release upper bound posed by
Nikolov and Ullman, and recover their previously resolved normalized
Euclidean rate through the same framework.  The transcript envelope converts the useful local behavior of
two PMW transcripts into globally pure-DP laws without paying either the
number of datasets or an additional $1/\eps$ factor.  The proof analyzes each
envelope using randomly sampled row changes, uses likelihood-ratio moments
only for nearby databases, switches to a mixture bound farther away, and sums
blocks rather than individual Hamming radii.  The resulting expected rates are
\[
  O\!\left(\min\left\{1,
  \sqrt{\frac{\log(2T)\log(2k)}{\eps n}}\right\}\right)
  \quad\text{and}\quad
  O\!\left(\min\left\{1,
  \sqrt{\frac{\log(2T)}{\eps n}}\right\}\right)
\]
for expected worst-coordinate and normalized Euclidean error, respectively.

\paragraph{Tools used.}
OpenAI Codex and Harmonic Aristotle were used for assistance with Lean
formalization and proof search; OpenAI Codex was also used for editorial
suggestions on clarity and organization.  The author reviewed the paper and
formalization and takes responsibility for their contents.

\appendix
\section{Supporting analytic lemmas}
\label{app:elementary-proofs}

The following proofs support the finite-space estimates invoked in the main
likelihood-envelope argument.

\subsection{Bounded likelihood ratios}
\label{app:proof-bounded-lr}

\begin{lemma}[Bounded likelihood ratio to Rényi divergence]
\label{lem:bounded-lr}
Suppose $P,Q$ have common support and
\[
  e^{-a}\le \frac{P(\omega)}{Q(\omega)}\le e^a
  \qquad\text{for every }\omega.
\]
Then, for every $r>1$,
\[
  \Ren_r(P\Vert Q)\le\frac12r a^2.
  \tag{A.1}\label{eq:bounded-lr-renyi}
\]
The same estimate composes adaptively: if each of $J$ conditional kernels has
the displayed likelihood-ratio bound with the same $a$, then the transcript
laws obey $\Ren_r(P\Vert Q)\le Jr a^2/2$.
\end{lemma}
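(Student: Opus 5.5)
The natural route runs through the privacy-loss random variable. Write $Z(\omega)=\log(P(\omega)/Q(\omega))$, so $\abs{Z}\le a$ pointwise. The quantity to control is
\[
  e^{(r-1)\Ren_r(P\Vert Q)}
  =\sum_\omega Q(\omega)\left(\frac{P(\omega)}{Q(\omega)}\right)^r
  =\E_Q e^{rZ}.
\]
Two facts about $Z$ under $Q$ will be used. First, $\E_Q e^{Z}=1$, since $P$ sums to one. Second, Jensen gives $\E_Q Z\le \log\E_Q e^Z=0$. I will bound $\E_Q e^{rZ}$ using Hoeffding's lemma for the bounded variable $Z\in[-a,a]$ with mean $\mu=\E_Q Z\le 0$:
\[
  \E_Q e^{rZ}\le \exp\{r\mu+r^2(2a)^2/8\}=\exp\{r\mu+r^2a^2/2\}.
\]
This crude bound only yields $\Ren_r\le r^2a^2/(2(r-1))$, which is not $ra^2/2$. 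So the mean must be exploited more carefully.

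The sharper approach is to use the normalization $\E_Q e^Z=1$. Apply Hoeffding to the tilted variable: $\E_Q e^{rZ}=\E_P e^{(r-1)Z}$, where $Z\in[-a,a]$ under $P$ as well. Its mean under $P$ satisfies $\E_P Z=\KL(P\Vert Q)$, and $\KL(P\Vert Q)\le \log\E_P e^{Z}$. Mixing the two directions, a clean way to conclude is convexity in the order. The function $\phi(r)=\log\E_Q e^{rZ}$ is convex with $\phi(0)=\phi(1)=0$. Its second derivative is the variance of $Z$ under the $e^{rZ}$-tilted law, which is at most $a^2$ because the variable lies in an interval of length $2a$. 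Hence $\phi(r)\le \tfrac12 a^2 r(r-1)$: the function $\psi(r)=\phi(r)-\tfrac12a^2r(r-1)$ is concave, vanishes at $r=0$ and $r=1$, and is therefore nonpositive for $r>1$. Dividing by $r-1$ gives $\Ren_r(P\Vert Q)\le ra^2/2$. I expect this variance-of-tilted-law step to be the main point to get right. Bounding the variance by $a^2$ (half the range squared) is Popoviciu's inequality, which I will state explicitly.

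For adaptive composition, I will argue by induction on the number of rounds $J$, with the statement in the form $\E_Q(P/Q)^r\le e^{J r(r-1)a^2/2}$. Factor the transcript ratio as a product of conditional ratios. Conditioning on the first $J-1$ selections, the single-round bound gives a conditional expectation under $Q$ of at most $e^{r(r-1)a^2/2}$ for the last factor, uniformly in the prefix. The tower property then peels off one factor, and the inductive hypothesis handles the prefix. Taking logarithms and dividing by $r-1$ yields $\Ren_r(P\Vert Q)\le Jra^2/2$.

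All spaces are finite with common support, so every expectation is a finite sum and the differentiation in $r$ is legitimate.
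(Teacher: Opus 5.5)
Your proof is correct. It reaches the single-round bound by a different route from the paper.

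The paper works with $Z=\log(P/Q)$ under $P$ and uses two separate steps:
\begin{itemize}
\item it bounds $\E_P Z=\KL(P\Vert Q)\le a\tanh(a/2)\le a^2/2$ by a two-point extremal argument (a mean-preserving endpoint replacement of $P/Q$ plus convexity of $u\log u$);
\item it then applies Hoeffding's lemma under $P$ at tilt $t=r-1$, giving $\log\E_Pe^{tZ}\le t\KL(P\Vert Q)+t^2a^2/2\le t(1+t)a^2/2$.
\end{itemize}

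You instead interpolate $\phi(r)=\log\E_Qe^{rZ}$ between its two known zeros $\phi(0)=\phi(1)=0$, using only the curvature bound $\phi''\le a^2$ from Popoviciu. Concavity of $\phi(r)-\tfrac12a^2r(r-1)$ then yields $\phi(r)\le\tfrac12a^2r(r-1)$ for $r>1$.

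Your version buys the following:
\begin{itemize}
\item It is more self-contained. The tilted-variance bound is exactly the ingredient inside the proof of Hoeffding's lemma.
\item The separate KL estimate becomes unnecessary. It falls out as $\KL(P\Vert Q)=\phi'(1)\le a^2/2$ from the same concavity at $r=1$.
\item It lands directly in the moment form $\E_Q(P/Q)^r\le e^{r(r-1)a^2/2}$ that your induction consumes.
\end{itemize}

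The paper's split buys something too. It retains the slightly sharper first-order term $a\tanh(a/2)$, so it actually gives $\Ren_r(P\Vert Q)\le a\tanh(a/2)+(r-1)a^2/2$, although this refinement is never used.

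Your adaptive composition (factor into conditional ratios, apply the uniform one-round moment bound at the last round, peel it off with the tower property, and induct) is the same as the paper's.
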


\begin{proof}[Proof of \cref{lem:bounded-lr}]
Under $P$, put $Z=\log(P/Q)$.  Thus $Z\in[-a,a]$.  The sharp two-point
extremal bound for a bounded likelihood ratio gives
\[
  \E_P Z=\KL(P\Vert Q)
  \le a\frac{e^a-1}{e^a+1}
  =a\tanh(a/2)\le a^2/2.
\]
Indeed, under $Q$ write $R=P/Q$, so $\E_QR=1$ and
$\KL(P\Vert Q)=\E_Q[R\log R]$.  Replacing each value of
$R\in[e^{-a},e^a]$ by a random endpoint value with the same conditional mean
preserves $\E_QR$ and, by convexity of $u\mapsto u\log u$, can only increase
the KL divergence; the resulting two-point law gives the displayed bound.
Hoeffding's lemma
\cite{hoeffding1963probability} for $Z\in[-a,a]$ now gives, with $t=r-1$,
\[
  \log\E_P e^{tZ}
  \le t\E_PZ+\frac{t^2a^2}{2}
  \le\frac{t(1+t)a^2}{2}.
\]
Since $\Ren_r(P\Vert Q)=t^{-1}\log\E_Pe^{tZ}$, this proves
\eqref{eq:bounded-lr-renyi}.  For adaptive composition, factor the transcript
likelihood ratio into its conditional ratios, apply the corresponding
conditional moment estimate at the last round, and iterate the tower
property.  The logarithmic moment costs add.
\end{proof}

\subsection{Hedge regret}
\label{app:proof-hedge}

\begin{lemma}[Hedge]
\label{lem:hedge}
Let $\mu_0$ be uniform on $\cD$.  Given functions
$s_t:\cD\to[-1,1]$, define $\mu_t$ recursively by
\eqref{eq:hedge-update}, where $0<\gamma\le1$.
Then every probability distribution $\mu_*$ on $\cD$ satisfies
\[
  \sum_{t=1}^J\bigl(s_t(\mu_*)-s_t(\mu_{t-1})\bigr)
  \le\frac{\log T}{\gamma}+\frac{\gamma J}{2}.
  \tag{A.2}\label{eq:hedge-regret}
\]
\end{lemma}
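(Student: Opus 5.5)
The plan is the standard multiplicative-weights potential argument, carried out exactly so that no asymptotic approximation enters. Define the potential
\[
  \Phi_t=\sum_{d\in\cD}\mu_0(d)\exp\Bigl\{\gamma\sum_{\tau\le t}s_\tau(d)\Bigr\},
\]
so that $\Phi_0=1$ and, by unrolling \eqref{eq:hedge-update}, $\mu_t(d)=\mu_0(d)\exp\{\gamma\sum_{\tau\le t}s_\tau(d)\}/\Phi_t$. The two sides of the regret bound will come from comparing an upper bound and a lower bound on $\log\Phi_J$.

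\emph{Upper bound.} The ratio of consecutive potentials is
\[
  \Phi_t/\Phi_{t-1}=\sum_d\mu_{t-1}(d)e^{\gamma s_t(d)}.
\]
Since $s_t(d)\in[-1,1]$, the exponent $\gamma s_t$ lies in an interval of length $2\gamma$. Hoeffding's lemma then gives
\[
  \log\sum_d\mu_{t-1}(d)e^{\gamma s_t(d)}\le\gamma s_t(\mu_{t-1})+\frac{\gamma^2(2)^2}{8}=\gamma s_t(\mu_{t-1})+\frac{\gamma^2}{2}.
\]
Summing over $t$ yields
\[
  \log\Phi_J\le\gamma\sum_t s_t(\mu_{t-1})+\frac{\gamma^2J}{2}.
\]
This is exactly the constant $\gamma J/2$ in the statement after dividing by $\gamma$. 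The restriction $\gamma\le1$ is not even needed for this step, but it is harmless.

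\emph{Lower bound against an arbitrary comparator.} Write $\Phi_J=\sum_d\mu_0(d)e^{\gamma S(d)}$ with $S=\sum_t s_t$. For any probability $\mu_*$ on $\cD$ we have the Gibbs variational inequality
\[
  \log\sum_d\mu_0(d)e^{\gamma S(d)}\ge\gamma S(\mu_*)-\KL(\mu_*\Vert\mu_0).
\]
This follows from Jensen's inequality applied to $\log\E_{\mu_*}[\mu_0e^{\gamma S}/\mu_*]$, restricted to the support of $\mu_*$. Because $\mu_0$ is uniform, $\KL(\mu_*\Vert\mu_0)=\log T-H(\mu_*)\le\log T$. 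Hence
\[
  \log\Phi_J\ge\gamma\sum_t s_t(\mu_*)-\log T.
\]

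Combining the two bounds and dividing by $\gamma>0$ gives \eqref{eq:hedge-regret}. The only mildly delicate point is the comparator step. For point masses $\mu_*=\delta_d$ it is trivial, since $\Phi_J\ge e^{\gamma S(d)}/T$. For a general $\mu_*$ one can either invoke the variational inequality above or argue more cheaply: $\log\Phi_J\ge\gamma\max_d S(d)-\log T\ge\gamma S(\mu_*)-\log T$, because a linear functional is maximized at a vertex of the simplex. I would use this second route, since it avoids KL divergence altogether and is the form convenient for formalization. I do not expect any genuine obstacle; the main care needed is to use Hoeffding's constant with the interval length $2\gamma$ so that the final constant is exactly $\gamma J/2$.
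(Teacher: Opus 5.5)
Your proof is correct and is essentially the paper's argument. The paper telescopes $\KL(\mu_*\Vert\mu_t)$, whose increments are $\log\E_{\mu_{t-1}}e^{\gamma s_t}-\gamma s_t(\mu_*)$, so the telescoped sum is exactly your $\log\Phi_J-\gamma S(\mu_*)$, and the same Hoeffding step gives the constant $\gamma^2/2$. The only difference is the closing comparator step: the paper uses $\KL(\mu_*\Vert\mu_J)\ge0$ and $\KL(\mu_*\Vert\mu_0)\le\log T$ (your Gibbs variational inequality), whereas your preferred bound $\Phi_J\ge e^{\gamma\max_d S(d)}/T$ plus linearity is equally valid and avoids relative entropy entirely.
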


\begin{proof}[Proof of \cref{lem:hedge}]
The change in relative entropy is
\[
 \KL(\mu_*\Vert \mu_t)-\KL(\mu_*\Vert \mu_{t-1})
 =\log\E_{\mu_{t-1}}e^{\gamma s_t}-\gamma s_t(\mu_*).
\]
Hoeffding's lemma bounds the logarithm by
$\gamma s_t(\mu_{t-1})+\gamma^2/2$.  Sum over $t$, use nonnegativity of the
final relative entropy, and note that
$\KL(\mu_*\Vert \mu_0)\le\log T$.  This is the multiplicative-weights
potential argument \cite{freund1997decision}.
\end{proof}

\subsection{Softmax selection gap}
\label{app:proof-softmax}

\begin{lemma}[Softmax gap]
\label{lem:softmax-gap}
Let $z_1,\ldots,z_K\in\R$, and sample $I$ with probability proportional to
$e^{\eta z_i}$, where $\eta>0$.  For
$G=\max_i z_i-z_I$,
\[
  \Prb(G\ge u)\le K e^{-\eta u},
  \qquad
  \norm{G}_{L_2}\le\frac{4\log(2K)}{\eta}.
  \tag{A.3}\label{eq:softmax-gap}
\]
\end{lemma}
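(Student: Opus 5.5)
The plan is to prove the two parts of \eqref{eq:softmax-gap} in order: first the tail bound, then the $L_2$ bound obtained by integrating the tail.

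\emph{Tail bound.} Let $i^\star$ be an index attaining $\max_i z_i$. For each index $i$ with $z_i\le z_{i^\star}-u$,
\[
\Prb(I=i)=\frac{e^{\eta z_i}}{\sum_j e^{\eta z_j}}\le \frac{e^{\eta z_i}}{e^{\eta z_{i^\star}}}\le e^{-\eta u},
\]
because the normalizer contains the term $e^{\eta z_{i^\star}}$. The event $\{G\ge u\}$ is a union of at most $K$ such indices, so $\Prb(G\ge u)\le Ke^{-\eta u}$. This holds for every $u\ge0$, and trivially so when $Ke^{-\eta u}\ge1$.

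\emph{Second moment.} Since $G\ge0$, the layer-cake formula gives
\[
\E G^2=\int_0^\infty 2u\,\Prb(G\ge u)\,du\le\int_0^\infty 2u\min\{1,Ke^{-\eta u}\}\,du.
\]
Split at $u_0=\log K/\eta$. On $[0,u_0]$ bound the integrand by $2u$, which contributes $u_0^2$. On $[u_0,\infty)$ substitute $u=u_0+v$ and use $Ke^{-\eta u}=e^{-\eta v}$; this contributes
\[
\int_0^\infty 2(u_0+v)e^{-\eta v}\,dv=\frac{2u_0}{\eta}+\frac{2}{\eta^2}.
\]
Altogether,
\[
\E G^2\le\frac{\log^2K+2\log K+2}{\eta^2}=\frac{(\log K+1)^2+1}{\eta^2}.
\]

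\emph{Comparison with $\log(2K)$.} It remains to show $(\log K+1)^2+1\le16\log^2(2K)$. Since $K\ge1$, we have $\log(2K)\ge\log2>0.69$. Write $\log K+1=\log(2K)+(1-\log2)\le 1.5\log(2K)$, using $1-\log 2<0.31\le0.45\log(2K)$. Also $1\le 2.1\log^2(2K)$. Hence the left side is at most $(2.25+2.1)\log^2(2K)\le16\log^2(2K)$, and taking square roots gives $\norm{G}_{L_2}\le4\log(2K)/\eta$.

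The only step that needs care is this final constant comparison, in particular the small-$K$ case. At $K=1$ we have $G=0$ identically, so the bound holds trivially there, and the slack in the constant $4$ absorbs the remaining cases. Using the shifted logarithm $\log(2K)$ rather than $\log K$ is what keeps the bound valid for every $K\ge1$.
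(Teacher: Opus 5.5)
Your proof is correct and follows the paper's argument essentially step for step. You bound the tail by comparing each far index to a maximizer, then integrate the tail split at $u_0=\log K/\eta$ to get $((\log K)^2+2\log K+2)/\eta^2$, and finally compare with $16\log^2(2K)$; the paper only asserts that last comparison, while you check it explicitly.
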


\begin{proof}[Proof of \cref{lem:softmax-gap}]
The total softmax weight of indices at least $u$ below the maximum is at most
$Ke^{-\eta u}$ times the weight of a maximizer.  This proves the tail.  With
$u_0=(\log K)/\eta$ and the tail-integration identity
$\E G^2=\int_0^\infty2u\Prb(G\ge u)\,du$,
\[
  \E G^2
  \le u_0^2+2K\int_{u_0}^\infty ue^{-\eta u}\,du
  =\frac{(\log K)^2+2\log K+2}{\eta^2}.
\]
The square root is at most $4\log(2K)/\eta$.
\end{proof}

\subsection{Rényi maximum}
\label{app:proof-renyi-max}

\begin{lemma}[Rényi maximum]
\label{lem:renyi-max}
Let $N\ge1$, and let $P_0,P_1,\ldots,P_N$ have common finite support and masses
$p_0,p_1,\ldots,p_N$.  Suppose, for some $a>1$ and $A\ge0$,
\[
  \Ren_a(P_i\Vert P_0)\le aA\qquad(1\le i\le N).
  \tag{A.4}\label{eq:renyi-hyp}
\]
For every nonnegative $G$, with $b=a/(a-1)$,
\[
  \sum_\omega\max_{1\le i\le N}p_i(\omega)G(\omega)
  \le
  \exp\!\left(\frac{\log N}{a}+(a-1)A\right)
  \norm{G}_{L_b(P_0)}.
  \tag{A.5}\label{eq:renyi-max}
\]
\end{lemma}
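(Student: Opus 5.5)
The approach is to dominate the maximum of the $N$ likelihoods pointwise by a single power-mean against $P_0$. Then Hölder's inequality moves the weight $G$ to the conjugate norm $\norm{G}_{L_b(P_0)}$, and the hypothesis \eqref{eq:renyi-hyp} controls the remaining factor.

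Because the supports are common, every ratio $\Lambda_i=p_i/p_0$ is well defined on the support. For each $\omega$,
\[
  \max_{1\le i\le N}p_i(\omega)G(\omega)
  = p_0(\omega)\,G(\omega)\max_i\Lambda_i(\omega).
\]
Summing over $\omega$ gives $\E_{P_0}[G\max_i\Lambda_i]$. Hölder's inequality with exponents $a$ and $b=a/(a-1)$ under $P_0$ bounds this by
\[
  \norm{\max_i\Lambda_i}_{L_a(P_0)}\norm{G}_{L_b(P_0)}.
\]

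Next we bound the maximum by a sum. For $a>1$ we have $(\max_i\Lambda_i)^a\le\sum_i\Lambda_i^a$ pointwise, so
\[
  \E_{P_0}(\max_i\Lambda_i)^a\le\sum_{i=1}^N\E_{P_0}\Lambda_i^a .
\]
By the definition of Rényi divergence,
\[
  \E_{P_0}\Lambda_i^a=\sum_\omega p_0(p_i/p_0)^a=\exp\{(a-1)\Ren_a(P_i\Vert P_0)\}\le\exp\{(a-1)aA\}.
\]
The sum is therefore at most $N e^{a(a-1)A}$. Taking the $1/a$ power yields $\exp\{\log(N)/a+(a-1)A\}$, which is exactly the prefactor in \eqref{eq:renyi-max}.

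There is no real obstacle here. The only points needing care are the exponent bookkeeping and the fact that the hypothesis is scaled as $aA$, which makes the power $1/a$ produce $(a-1)A$ cleanly. If $G$ vanishes on parts of the support, nothing changes, since all quantities are finite sums over a common finite support.
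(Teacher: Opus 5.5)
Your proof is correct and matches the paper's argument essentially step for step. It writes the sum as $\E_{P_0}[G\max_i p_i/p_0]$, applies Hölder with exponents $a$ and $b$, bounds $(\max_i\Lambda_i)^a$ by $\sum_i\Lambda_i^a$, and uses $\E_{P_0}\Lambda_i^a=\exp\{(a-1)\Ren_a(P_i\Vert P_0)\}\le e^{a(a-1)A}$ to obtain the prefactor $N^{1/a}e^{(a-1)A}$.
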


\begin{proof}[Proof of \cref{lem:renyi-max}]
Set $R_i=p_i/p_0$.  Pointwise,
$(\max_iR_i)^a\le\sum_iR_i^a$, while
\[
  \E_{P_0}R_i^a
  =\exp\bigl((a-1)\Ren_a(P_i\Vert P_0)\bigr)
  \le e^{a(a-1)A}.
\]
Thus $\norm{\max_iR_i}_{L_a(P_0)}
\le N^{1/a}e^{(a-1)A}$.  Hölder's inequality gives
\eqref{eq:renyi-max}.
\end{proof}

\subsection{Finite blocked-radius summation}
\label{app:proof-blocked-shell-sum}

\begin{proof}[Proof of \cref{lem:blocked-shell-sum}]
The exponent in \eqref{eq:complete-shell-factor} is
\[
  -\frac\eps2-\frac{j\theta}{2}
  +\frac{\theta(j+1)}6.
\]
If $\eps\le3$, use $\theta\le6$ to bound this by
$1-j\theta/3$.  If $\eps>3$, use $\theta=\eps$ to rewrite it as
$-\eps/3-j\theta/3$.  These are exactly the two cases represented by
$u_\eps$, with one common factor $e$.  This proves
\eqref{eq:pointwise-shell-sum}.

Since $\theta\ge3$,
\[
  e^{-\theta/3}\le e^{-1}<\frac12.
\]
The finite geometric estimates
\[
  \sum_{j=0}^{N-1}2^{-j}\le2,
  \qquad
  \sum_{j=0}^{N-1}(j+1)2^{-j}\le4
\]
therefore prove \eqref{eq:geometric-sums}.
\end{proof}

\bibliographystyle{alpha}
\bibliography{references}

\end{document}